\documentclass[11pt]{article}
\usepackage[utf8]{inputenc}
\usepackage{hyperref}
\usepackage{amsmath}
\usepackage{amsthm}
\usepackage{amssymb}
\usepackage{authblk}
\usepackage{cite}
\usepackage{lipsum}
\usepackage{geometry}
\usepackage{enumerate}
\usepackage{array}
\usepackage{multicol}
\usepackage{multirow}
\usepackage{xcolor}
\usepackage{slashbox}
\textwidth=6.6 in
\textheight=8.5 in
\oddsidemargin=0.10in
\evensidemargin=0.10in
\topmargin=0.0in
\headsep=18pt
\headheight=12pt
\newtheorem{theorem}{Theorem}[section]
\numberwithin{theorem}{section}
\newtheorem{lemma}[theorem]{Lemma}
\newtheorem{defi}[theorem]{Definition}
\newtheorem{corollary}[theorem]{Corollary}

\newtheorem{remark}[theorem]{Remark}

\DeclareMathOperator{\tr}{Tr}

\newcommand{\F}{\mathbb F}
\newcommand{\Fbn}{\mathbb{F}_{2^n}}

\newcommand{\Fbnmul}{\mathbb{F}_{2^n}^*}

\newcommand{\Fbnn}{\mathbb{F}_{2^n} \longrightarrow \mathbb{F}_{2^n}}

\begin{document}

    \title{Investigations of $c$-Differential Uniformity of Permutations with Carlitz Rank 3}
    \author{Jaeseong Jeong$^1$, Namhun Koo$^2$, Soonhak Kwon$^1$\\
        \small{\texttt{ Email: wotjd012321@naver.com, nhkoo@ewha.ac.kr, shkwon@skku.edu}}\\
\small{$^1$Applied Algebra and Optimization Research Center, Sungkyunkwan University, Suwon, Korea}\\
\small{$^2$Institute of Mathematical Sciences, Ewha Womans University, Seoul, Korea}\\
    }

\maketitle
\begin{abstract}
The $c$-differential uniformity is recently proposed to reflect resistance against some variants of differential attack. Finding functions with low $c$-differential uniformity is attracting attention from many researchers. For even characteristic, it is known that permutations of low Carlitz rank have good cryptographic parameters, for example, low differential uniformity, high nonlinearity, etc. In this paper we show that permutations with low Carlitz rank have low $c$-differential uniformity. We also investigate $c$-differential uniformity of permutations with Carlitz rank 3 in detail.

\bigskip
\noindent \textbf{Keywords.} $c$-Differential Uniformity, Carlitz rank, Permutation,

\bigskip
\noindent \textbf{Mathematics Subject Classification(2020)} 94A60, 06E30
\end{abstract}

\section{Introduction}

Functions with good cryptographic parameters have many applications in cryptographic purpose.
Very recently, Ellinsen et. al.\cite{EFR+20} proposed a new cryptographic parameter, $c$-differential uniformity, which is useful to measure the resistance against some variants of differential attack\cite{BCJW02}. Functions with low $c$-differential uniformity are resistant against differential attacks of this type. Finding functions with low $c$-differential uniformity has been a good research topic to many researchers, and many classes of functions with low $c$-differential uniformity were proposed.\cite{BC21,BCRP21,HPR+21,MRS+21,Sta21b,Sta21d,TZJT21,WZ21,WLZ21,Yan21,ZH21,ZY22}
It is known \cite{MRS+21} that some functions of low differential uniformity have high $c$-differential uniformity. Hence it is also important to investigate $c$-differential uniformity of known functions with low differential uniformity.

For even characteristic, some cryptographic properties of permutations of low Carlitz rank have been investigated in several researches, and we can see that they have good cryptographic parameters. For example, the multiplicative inverse function of Carlitz rank 1 has low differential uniformity\cite{Nyb94}, high nonlinearity\cite{LW90}, low boomerang uniformity\cite{BC18}, low differential-linear uniformity\cite{CKL+21}, low $c$-differential uniformity\cite{EFR+20}, and low $c$-boomerang uniformity\cite{Sta21a}. Furthermore, it is used as the S-box of the AES(Advanced Encryption Standard) cryptosystem. Cryptographic parameters of permutations of Carlitz rank 2 also have been widely investigated. It is known that they have low differential uniformity\cite{LWY13e}, high nonlinearity\cite{LWY13e}, low boomerang uniformity\cite{LQSL19}, low differential-linear uniformity\cite{JKK22a}, low $c$-differential uniformity\cite{Sta21b}, and low $c$-boomerang uniformity\cite{Sta21b}. Several classes of differentially $4$-uniform involutions with low Carlitz rank were proposed in \cite{JKK22b}.

Permutations with Carlitz rank 3 also have good cryptographic parameters, for example, high nonliearity\cite{LWY13e}, and low differential-uniformity\cite{JKK22a}. Differentially $4$-uniform permutations with Carlitz rank 3 are characterized in \cite{LWY13e, JKK20}. The boomerang uniformities of permutations with Carlitz rank 3 were investigated in \cite{JKK20}. Permutations with low Carlitz rank are known to have low differential uniformity, but it does not imply low $c$-differential uniformity as mentioned above. 

In this paper, we show that for binary finite fields the $c$-differential uniformity of permutations with Carlitz rank $m$ is upper bounded by $m+2$ with $c\ne 1$. Furthermore, we investigate $c$-differential uniformity of permutations of Carlitz rank 3 in detail. 

The rest of this paper is organized as follows. In section 2, we introduce some basic preliminaries and previous results which are necessary in subsequent sections.
In section 3, we propose an upper bound of $c$-differential uniformity depending on Carlitz rank of given permutation. We investigate $c$-differential uniformity of permutations with Carlitz rank $3$ in section 4. Finally we give the concluding remark in Section 5.

\section{Preliminaries}\label{sec_pre}

We only consider the even characteristic case. Throughout this paper, we let :
\begin{itemize} 
\item $\Fbn$ be the finite field of $2^n$ elements and $\Fbnmul$ be the multiplicative group of $\Fbn$
\item $\tr : \Fbn \longrightarrow \F_2$ be the field trace from $\Fbn$ onto $\F_2$ given by $\tr(x)=x+x^2+x^{2^2}+\cdots+ x^{2^{n-1}}$
\item $Inv$ be the multiplicative inverse function on $\Fbn$, and $x^{-1}=Inv(x)$ for all $x\in\Fbnmul$ and $0^{-1}=0$
\end{itemize}

Next we introduce $c$-differential uniformity which is the subject of this paper.

\begin{defi}(\cite{EFR+20}) Let $F : \Fbnn$ be a function and $c\in\Fbn$.\\
(i) We denote the $c$-differential of $F$ by $_cD_a F(x)=F(x+a)-cF(x)$.\\
(ii) Let $a,b\in\Fbn$. We denote $_c\Delta_F(a,b)$ by the number of solutions in $\Fbn$ of $_cD_a F(x)=b$.\\
(iii) The $c$-differential uniformity of $F$ is defined by $_c \Delta_F = \max\{ {}_c\Delta_F(a,b) : a,b\in \Fbn\text{ and }a\ne 0 \text{ if } c=1 \}$.\\
(iv) $F$ is called a perfect $c$-nonlinear(PcN) function if $_c \Delta_F =1$.\\
(v) $F$ is called a almost perfect $c$-nonlinear(APcN) function if $_c \Delta_F =2$.
\end{defi}

\medskip
 It is known \cite{LN15} that, for any permutation $F : \Fbnn$, there is $m \geq 0$ and $a_i \in \Fbn \,\, (0\leq i \leq
m+1)$ such that
\begin{equation}\label{car}
F(x) = (\cdots((a_0x + a_1)^{2^n-2}+a_2)^{2^n-2} \cdots +a_m)^{2^n-2} + a_{m+1},
\end{equation}
where $a_0, a_2, \cdots , a_{m} \neq 0$.
 The above expression means that any permutation on $\Fbn$ is generated by the inverse function $x^{2^n-2}$ and linear functions $ax+b\,\, (a\neq 0)$. The \emph{Carlitz rank} of $F$ is the least nonnegative integer $m$ satisfying the above expression\cite{ACMT09}.

It is easy to see that if $F$ has the Carlitz rank $1$ then $F$ is affine equivalent to $Inv$. The $c$-differential uniformity of $Inv$ was investigated in \cite{EFR+20}.

\begin{theorem}\label{cdu_inv_even}(\cite{EFR+20}) Let  $c\ne 0,1$. Then $Inv$ is APcN if and only if $\tr(c)=\tr(1/c)=1$. Otherwise, $ _c\Delta_{Inv}=3$.
\end{theorem}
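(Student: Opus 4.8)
The plan is to compute $_c\Delta_{Inv}(a,b)$ directly, by counting the solutions of $_cD_a Inv(x) = (x+a)^{-1} + c\,x^{-1} = b$ (recall that $-c=c$ in characteristic $2$), and then to read off the maximum over all $a,b$. First I would dispose of the case $a=0$: here the map becomes $x \mapsto (1+c)x^{-1}$, which is a permutation because $c \neq 1$, so $_c\Delta_{Inv}(0,b)=1$ for every $b$. For $a \neq 0$ the substitution $x = au$ rewrites the equation as $a^{-1}\big((u+1)^{-1} + c\,u^{-1}\big) = b$, whence $_c\Delta_{Inv}(a,b) = {}_c\Delta_{Inv}(1,ab)$. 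Thus it suffices to analyse $a = 1$, which removes the parameter $a$ entirely.

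With $a=1$ I would split into the three regimes $x=0$, $x=1$, and $x \notin \{0,1\}$. Evaluating at $x=0$ and $x=1$ shows that these points yield a solution exactly when $b=1$ and $b=c$, respectively. For $x \notin \{0,1\}$ I would clear denominators to reach the quadratic $bx^2 + (b+1+c)x + c = 0$. A short check that $x=0$ (which forces $c=0$) and $x=1$ (which forces $1=0$) are never roots of this quadratic guarantees that the boundary solutions and the quadratic solutions do not overlap. Hence $_c\Delta_{Inv}(1,b)$ equals the number of roots of the quadratic, increased by $1$ when $b \in \{1,c\}$.

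The core of the argument is the two distinguished values. Substituting $b=1$ gives $x^2 + cx + c = 0$, and after the scaling $x = cy$ this becomes the Artin--Schreier form $y^2 + y + 1/c = 0$, which by the standard criterion has two roots iff $\tr(1/c)=0$; so the count at $b=1$ is $3$ when $\tr(1/c)=0$ and $1$ otherwise. Symmetrically $b=c$ yields $cx^2 + x + c = 0$, reducing to $y^2 + y + c^2 = 0$; since $\tr(c^2)=\tr(c)$ this splits iff $\tr(c)=0$, giving count $3$ when $\tr(c)=0$ and $1$ otherwise. As a quadratic has at most two roots, every other value of $b$ contributes at most $2$. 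Consequently $_c\Delta_{Inv}=3$ precisely when $\tr(c)=0$ or $\tr(1/c)=0$.

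It remains to show that in the complementary case $\tr(c)=\tr(1/c)=1$ the uniformity is exactly $2$ rather than $1$, i.e. that the generic quadratic genuinely splits for some admissible $b$. Writing the splitting condition as $\tr\!\big(cb/(b+1+c)^2\big)=0$ and putting $s = 1/(b+1+c)$, I would expand the argument as $cs + c(1+c)s^2$ and use the identity $\tr(u s^2)=\tr(u^{2^{n-1}}s)$ to linearise it into $\tr(\lambda s)=0$ with $\lambda = c + \big(c(1+c)\big)^{2^{n-1}}$; a linear trace condition is met by at least $2^{n-1}$ values of $s$, so for $n\ge 3$ some $b$ survives the finitely many exclusions $b\in\{0,1,c,1+c\}$ and forces the count up to $2$. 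The main obstacles I anticipate are precisely this final linearisation step pinning the APcN case to $2$ (together with the genuinely degenerate small field $\Fbn=\F_4$, where $Inv$ is the Frobenius and must be treated separately), and the bookkeeping that rules out double counting at the boundary points; the reduction of each distinguished quadratic to Artin--Schreier form and the identity $\tr(c^2)=\tr(c)$ are the levers that turn ``the quadratic splits'' into the clean conditions $\tr(c)$ and $\tr(1/c)$.
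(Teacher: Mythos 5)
The paper does not actually prove this statement: Theorem~\ref{cdu_inv_even} is imported verbatim from \cite{EFR+20}, so there is no internal proof to compare against. Judged on its own, your argument is correct and follows the same general line as the cited source and as the machinery this paper builds later: split off the exceptional points, reduce the rest to a quadratic, and convert ``two roots'' into trace conditions via the Artin--Schreier criterion (Lemma~\ref{quad_lemma_even} here). Indeed, the quadratic you derive is exactly the one the paper uses for general $a$, namely $bx^2+(ab+c+1)x+ac=0$, in Theorem~\ref{cdu_upper_inv} and Lemma~\ref{panot2_lemma}; your normalization to $a=1$ via $x=au$ is a clean streamlining that removes the parameter $a$. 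The most delicate step, pinning the APcN case to exactly $2$ rather than $1$, is handled by your linearization $\tr\left(cs+c(1+c)s^2\right)=\tr(\lambda s)$ with $\lambda=c+\left(c(1+c)\right)^{2^{n-1}}$; note that $\lambda=0$ would force $c^2=c(1+c)$, i.e.\ $c=0$, so the trace-zero set really is a proper hyperplane and your count of $2^{n-1}$ admissible $s$ is justified.

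Two fine points. First, your exclusion bookkeeping ($2^{n-1}$ hyperplane elements versus the bad values $s\in\{0,1/(1+c),1/c,1\}$) only closes the argument outright for $n\ge 4$; at $n=3$ the counts tie. The gap is easy to repair: when $\tr(c)=\tr(1/c)=1$, the values $s=1$ and $s=1/c$ correspond to $b=c$ and $b=1$, whose quadratics are precisely the two distinguished ones with no roots, so these $s$ automatically lie outside the hyperplane, leaving $2^{n-1}-2\ge 2$ good choices for $n\ge 3$ (alternatively, no $c\in\F_8\setminus\F_2$ satisfies $\tr(c)=\tr(1/c)=1$, so the APcN case is vacuous at $n=3$). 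Second, your suspicion about $\F_4$ is justified and is in fact a defect of the statement as quoted: over $\F_4$ one has $Inv(x)=x^2$, so $_cD_aInv(x)=(1+c)x^2+a^2$ is a bijection for every $a$, i.e.\ $Inv$ is PcN there even though $\tr(c)=\tr(1/c)=1$ for $c\in\F_4\setminus\F_2$. So the theorem implicitly assumes $n\ge 3$, and with that caveat your proof is sound.
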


The $c$-differential uniformity of $Inv \circ (0,1)$ was investigated in \cite{Sta21b}, where $(0,1)$ is the transposition that $0$ and $1$ are swapped. As observed in \cite{JKK20} that this case is related with permutations of Carlitz rank $2$.

\begin{theorem}\label{cdu_car2}(\cite{Sta21b}) Let $c\ne 0,1$ and $F=Inv \circ (0,1)$ on $\Fbn$.\\
(i) If $n=2$, then $_c\Delta_F =1$.\\
(ii) If $n=3$, then $_c\Delta_F \le 3$.\\
(iii) If $n\ge 4$, then $_c\Delta_F\le 4$.
\end{theorem}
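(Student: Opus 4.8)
The plan is to exploit that $F$ coincides with $Inv$ everywhere except on $\{0,1\}$, where the two values are swapped: $F(0)=1$, $F(1)=0$, and $F(x)=x^{-1}$ for $x\notin\{0,1\}$. For fixed $c\ne0,1$ and $a,b\in\Fbn$ I want to bound the number of $x$ solving ${}_cD_aF(x)=F(x+a)+cF(x)=b$. First I would dispose of $a=0$: the equation reads $(1+c)F(x)=b$, and since $c\ne1$ and $F$ is a permutation this has exactly one solution, so assume $a\ne0$ henceforth. Next I would record one elementary but decisive observation, valid for any permutation $G$ with $c\ne1$ and $a\ne0$: the equation $G(x+a)+cG(x)=b$ has at most one solution in each pair $\{y,y+a\}$. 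Indeed, if both $y$ and $y+a$ were solutions, adding the two relations gives $(1+c)\bigl(G(y)+G(y+a)\bigr)=0$, hence $G(y)=G(y+a)$ and $y=y+a$, a contradiction. I will call such a pair a \emph{domino}.

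For the main bound (iii) I would split the solutions according to whether $x$ lies in the exceptional set $E=\{0,1,a,a+1\}$. If $x\notin E$ then $x,x+a\notin\{0,1\}$, so the equation becomes the pure inverse equation $(x+a)^{-1}+cx^{-1}=b$; clearing denominators (legitimate since $x,x+a\ne0$) turns it into
$$ b x^{2} + (ba+1+c)x + ca = 0, $$
whose top coefficient $b$ and, when $b=0$, whose linear coefficient $1+c$ cannot both vanish (as $c\ne1$), so this is a nonzero polynomial of degree at most $2$ with at most two roots. Thus $x\notin E$ contributes at most $2$ solutions. The exceptional points split into the two dominoes $\{0,a\}$ and $\{1,a+1\}$ (note $0+a=a$ and $1+a=a+1$), so by the domino observation applied to $F$ they contribute at most $1+1=2$ solutions. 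Altogether ${}_c\Delta_F(a,b)\le4$. The only degeneracy is $a=1$, where $E=\{0,1\}$ is a single domino contributing $\le1$, which only improves the estimate; this establishes (iii).

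For the small fields I would sharpen the count. For $n=2$ one has $x^{-1}=x^{2}$ on $\mathbb F_4$, so a direct evaluation shows $F(x)=x+1$ is affine; then ${}_cD_aF(x)=(1+c)x+(a+1+c)$ is a bijection for $c\ne1$, whence ${}_c\Delta_F=1$, i.e.\ $F$ is PcN, giving (i). For $n=3$ the generic bound already gives $\le4$, and the remaining task is to exclude the extremal configuration in which both roots of the quadratic lie outside $E$ \emph{and} both dominoes contribute an exceptional solution. I would translate ``both quadratic roots avoid $1$ and $a+1$'' into the conditions $b\ne1+c$ and $b\ne1+c(1+a)^{-1}$ (the precise values of $b$ for which $x=1$, resp.\ $x=a+1$, is a root of the quadratic), combine these with the explicit equations forcing one exceptional solution in each domino, and show that over $\Fbn$ with $n=3$ these requirements are mutually inconsistent — either via the trace criterion governing when the quadratic splits in $\mathbb F_8$ or, failing a clean closed form, by a direct check over the finitely many triples $(a,b,c)$. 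This yields (ii).

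The main obstacle I anticipate is precisely this last refinement for $n=3$: the domino-plus-quadratic argument is robust and gives $\le4$ uniformly, but squeezing it to $\le3$ requires controlling the interaction between the quadratic's splitting behaviour and the two exceptional equations and verifying that $\mathbb F_8$ is too small to carry all four solutions simultaneously. Care is also needed with the coincidence cases — $a=1$, or $b=0$ where the quadratic degenerates to a genuinely linear equation — but in each of these the count only drops, so they are handled exactly as above.
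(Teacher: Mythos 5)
The paper itself does not prove this statement at all: it is quoted as a known result from \cite{Sta21b}, so the only in-paper material to compare against is the general machinery of Section 3. Your parts (i) and (iii) are correct and complete, and your argument for (iii) is essentially that machinery: your ``domino'' observation is exactly equation \eqref{asa_eqn} from the proof of Theorem \ref{cdu_upper}, and your split of the solutions into the exceptional set $E=\{0,1,a,a+1\}$ (two dominoes, at most one solution each) and its complement (where ${}_cD_aF(x)=b$ reduces to $bx^2+(ab+c+1)x+ca=0$, hence at most two solutions) is precisely the proof of Theorem \ref{cdu_upper_inv} specialized to $P=\{0,1\}$, which gives ${}_c\Delta_F\le\#P+2=4$. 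The observation that $F(x)=x+1$ on $\F_4$ disposes of (i) cleanly.

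Part (ii), however, contains a genuine gap: what you actually prove for $n=3$ is only ${}_c\Delta_F\le 4$, and the improvement to $\le 3$ is announced rather than carried out. You correctly isolate the extremal configuration that must be excluded (two roots of the quadratic outside $E$, plus one solution in each of the two dominoes), but you execute neither the trace analysis nor the finite verification you offer as alternatives; a proof that ends with ``or, failing a clean closed form, by a direct check'' has not established the claim. Moreover, the one concrete formula you state there is wrong: $x=1$ is a root of $bx^2+(ab+c+1)x+ca=0$ precisely when $b=c+(1+a)^{-1}$, not when $b=1+c$ (you appear to have conflated this with the condition that $x=1$ solves the original equation, which is $b=(1+a)^{-1}$ since $F(1)=0\ne Inv(1)$; the two conditions differ exactly because $F$ and $Inv$ disagree at $1$). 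The gap is easily repairable --- for $n=3$ there are at most $7\cdot 6\cdot 8$ triples $(a,c,b)$ to test, and the paper itself accepts exhaustive search for small $n$ in the remark opening Section 4 --- but as written, (ii) is not proved.
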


Next we introduce a well-known lemma about the number of solutions of quadratic equations.

\begin{lemma}\label{quad_lemma_even}
Let $p=2$, $a_2\in\Fbnmul$ and $a_1,a_0\in\Fbn$. Then,
\begin{equation*}
\#\{x\in\Fbn : a_2x^2+a_1x+a_0=0\}=
\begin{cases}
2 &\text{ if } a_1\ne0\text{ and }\tr\left(\frac{a_0a_2}{a_1^2}\right)=0,\\
1 &\text{ if } a_1=0,\\
0 &\text{ if } a_1\ne0\text{ and }\tr\left(\frac{a_0a_2}{a_1^2}\right)=1.\\
\end{cases}
\end{equation*}
\end{lemma}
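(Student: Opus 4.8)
The plan is to split into the two cases according to whether $a_1$ vanishes, and in the non-degenerate case to reduce to the standard Artin--Schreier equation $y^2+y=d$, whose solvability is governed entirely by the trace of $d$.

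First I would dispose of the case $a_1=0$. Here the equation collapses to $a_2x^2=a_0$, that is, $x^2=a_0/a_2$. Since the Frobenius map $x\mapsto x^2$ is an automorphism of $\Fbn$, it is in particular a bijection, so there is exactly one $x$ satisfying $x^2=a_0/a_2$. This yields the middle case (one solution), regardless of the value of $a_0$.

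Next, assume $a_1\ne 0$. I would apply the linear change of variable $x=\frac{a_1}{a_2}\,y$, which is a bijection of $\Fbn$ because $a_1/a_2\ne 0$ and hence preserves the number of solutions. Substituting and then dividing through by $a_1^2/a_2\ne 0$ transforms the equation into
$$y^2+y=\frac{a_0a_2}{a_1^2}.$$
Writing $d=a_0a_2/a_1^2$, the whole problem now reduces to counting the solutions of $y^2+y=d$.

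The crucial step is the analysis of the additive map $L:\Fbn\to\Fbn$ given by $L(y)=y^2+y$. It is $\F_2$-linear, and its kernel is $\{y:y^2=y\}=\{0,1\}=\F_2$, of size $2$; hence $L$ is two-to-one onto its image and $|\mathrm{Im}\,L|=2^{n-1}$. On the other hand, for every $y$ one has $\tr(L(y))=\tr(y^2)+\tr(y)=\tr(y)+\tr(y)=0$, using the Frobenius-invariance $\tr(y^2)=\tr(y)$; thus $\mathrm{Im}\,L$ lies inside the trace-zero hyperplane $\{z:\tr(z)=0\}$, which also has size $2^{n-1}$. Comparing cardinalities forces equality, so $\mathrm{Im}\,L$ is exactly the trace-zero hyperplane. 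From this I conclude: if $\tr(d)=0$ then $d\in\mathrm{Im}\,L$ and, since $L$ is two-to-one, there are exactly two values of $y$ (hence two $x$); if $\tr(d)=1$ then $d\notin\mathrm{Im}\,L$ and there is no solution. Substituting $d=a_0a_2/a_1^2$ recovers the first and third cases. The only mildly delicate point is the identification of $\mathrm{Im}\,L$ with the trace-zero set, but this follows immediately from the kernel computation together with $\tr(y^2)=\tr(y)$, so I do not anticipate any real obstacle.
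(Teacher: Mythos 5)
Your proof is correct, but there is nothing in the paper to compare it against: the paper states this lemma without proof, introducing it as a well-known fact. Your argument is precisely the classical one that this "well-known" label refers to — the Artin--Schreier criterion. Both cases are handled soundly: the $a_1=0$ case via bijectivity of the Frobenius, and the $a_1\ne 0$ case via the normalization $x=\frac{a_1}{a_2}y$, which correctly turns the equation into $y^2+y=\frac{a_0a_2}{a_1^2}$, followed by the kernel/image analysis of the additive map $L(y)=y^2+y$. The one point you should make explicit is that calling $\{z\in\Fbn : \tr(z)=0\}$ a hyperplane of size $2^{n-1}$ presupposes that $\tr$ is not identically zero; if $\tr$ vanished identically, the trace-zero set would be all of $\Fbn$ and your cardinality comparison would fail. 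This is easily supplied: $\tr(x)=x+x^2+\cdots+x^{2^{n-1}}$ is a nonzero polynomial of degree $2^{n-1}<2^n$, so it cannot vanish at every point of $\Fbn$, and its kernel is therefore an index-two $\F_2$-subspace. With that one-line observation added, your proof is complete and self-contained.
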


The following lemma is very simple and useful to our results.

\begin{lemma}\label{cdu_symm_lemma} Let $c\ne 0$. Then $_c\Delta_F(a,b)=\ _{c^{-1}}\Delta_F(a,bc^{-1})$ for all $a\in\Fbnmul$ and $b\in\Fbn$. If $F$ is a permutation then $_c\Delta_F={}_{c^{-1}}\Delta_F$.
\end{lemma}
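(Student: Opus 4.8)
The plan is to prove Lemma~\ref{cdu_symm_lemma} by a direct bijective/counting argument. I would start from the definition of the $c$-differential count. By definition, $_c\Delta_F(a,b)$ is the number of $x\in\Fbn$ satisfying $F(x+a)-cF(x)=b$. Since we work in even characteristic, subtraction and addition coincide, so this equation reads $F(x+a)+cF(x)=b$. The key observation is that multiplying this whole equation by $c^{-1}$ (which is legitimate since $c\ne 0$) produces an equivalent equation with $c$ replaced by $c^{-1}$.

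The main computation I would carry out is the following rewriting. Starting from $F(x+a)+cF(x)=b$, I substitute $x\mapsto x+a$ to recast the ``shifted'' argument; concretely, setting $y=x+a$ gives $F(y)+cF(y+a)=b$ (again using that $a+a=0$ in characteristic $2$, so that $y+a=x$). Dividing by $c$ yields $c^{-1}F(y)+F(y+a)=bc^{-1}$, i.e. $F(y+a)+c^{-1}F(y)=bc^{-1}$. This is exactly the defining equation whose solution count is $_{c^{-1}}\Delta_F(a,bc^{-1})$. Since $x\mapsto y=x+a$ is a bijection on $\Fbn$, the two solution sets are in bijection, and hence $_c\Delta_F(a,b)={}_{c^{-1}}\Delta_F(a,bc^{-1})$ for all $a\in\Fbnmul$ and $b\in\Fbn$.

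For the second assertion, I would take the maximum over admissible pairs $(a,b)$. When $F$ is a permutation, the definition of $_c\Delta_F$ ranges over all $a\in\Fbnmul$ and all $b\in\Fbn$ (the restriction $a\ne 0$ only being forced in the excluded case $c=1$, which does not arise here since $c\ne 1$ is implicit when we pass between $c$ and $c^{-1}$). As $b$ runs over all of $\Fbn$, so does $bc^{-1}$, because multiplication by the nonzero constant $c^{-1}$ is a bijection of $\Fbn$. Therefore the multiset of counts $\{{}_c\Delta_F(a,b)\}$ and $\{{}_{c^{-1}}\Delta_F(a,b')\}$ coincide as $(a,b)$ and $(a,b')$ range over the same index set, giving $_c\Delta_F={}_{c^{-1}}\Delta_F$ after taking maxima.

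I do not anticipate a serious obstacle here; the statement is elementary and the work is purely formal. The only point requiring a little care is the index-set bookkeeping: one must verify that the admissible range of $(a,b)$ is preserved under the correspondence $b\mapsto bc^{-1}$ and that the hypothesis $F$ being a permutation is what guarantees that the full range of $b$ is relevant (so that the maximum is genuinely over the same family). I would state explicitly that we use $c\ne 0$ to justify dividing by $c$ and to ensure $b\mapsto bc^{-1}$ is a bijection, and that characteristic $2$ is used only to identify addition with subtraction and to note $a+a=0$.
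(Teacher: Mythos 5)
Your proof of the first claim is correct and is essentially identical to the paper's: $x$ solves $F(x+a)+cF(x)=b$ if and only if $y=x+a$ solves $F(y+a)+c^{-1}F(y)=bc^{-1}$, whence ${}_c\Delta_F(a,b)={}_{c^{-1}}\Delta_F(a,bc^{-1})$ for all $a\in\Fbnmul$, $b\in\Fbn$. No issue there.

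The second claim, however, has a genuine gap, and it sits exactly in the ``index-set bookkeeping'' you flagged as the delicate point. You assert that the maximum defining ${}_c\Delta_F$ ranges over $a\in\Fbnmul$; but by the paper's Definition 2.1(iii), the restriction $a\ne 0$ is imposed \emph{only when} $c=1$. In the relevant case $c\ne 1$ (when $c=1$ the lemma is trivial since $c^{-1}=c$), the maximum is taken over \emph{all} $a\in\Fbn$, including $a=0$ --- your own parenthetical remark says as much, contradicting the claim it is meant to support. Since your bijection only covers $a\ne 0$, the values ${}_c\Delta_F(0,b)$ are unaccounted for, and the equality of maxima does not yet follow. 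This is precisely where the permutation hypothesis enters: for a permutation $F$ and $c\ne 1$, the equation ${}_cD_0F(x)=(1+c)F(x)=b$ has exactly one solution, so ${}_c\Delta_F(0,b)=1$ for every $b$ (and likewise for $c^{-1}$); since for any fixed $a\ne 0$ the counts ${}_c\Delta_F(a,b)$ sum to $2^n$ over $b$, the maximum over $a\ne 0$ is at least $1$, so the $a=0$ terms affect neither maximum. The paper's proof is your bijection plus exactly this one-line observation. By contrast, the role you assign to the permutation hypothesis --- that it ``guarantees the full range of $b$ is relevant'' --- is not correct: the range of $b$ is all of $\Fbn$ by definition, permutation or not.
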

\begin{proof}
Let $a\in\Fbnmul$ and $b\in\Fbn$. Then, it is easy to see that $x$ is a solution of $_c D_aF(x)=b$ if and only if $x+a$ is a solution of $_{c^{-1}}D_a(x)=bc^{-1}$. Thus we have $_c\Delta_F(a,b)={} _{c^{-1}}\Delta_F(a,bc^{-1})$. If $F$ is a permutation then $_c\Delta_F(0,b)=1$ for all $b\in\Fbn$ and hence we get $_c\Delta_F={} _{c^{-1}}\Delta_F$.
\end{proof}

\section{Upper bound on $c$-differential uniformity of permutations with low Carlitz rank}

There are many differentially $4$-uniform permutations obtained from modifying some points in the multiplicative inverse function(see \cite{JKK22a} and its references) defined on $\Fbn$. In this section we give an upper bound of $c$-differential uniformity of such differentially $4$-uniform permutations when $c\ne 1$.

\begin{theorem}\label{cdu_upper} Let $F$ and $G$ be permutations on $\Fbn$ such that $F(x)=G(x)$ for all $x\in\Fbn\setminus P$ for some nonempty $P\subseteq \Fbn$. If $c\ne 1$ then $_c\Delta_F\le {}_c\Delta_G+ \# P$.
\end{theorem}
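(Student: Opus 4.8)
The plan is to fix an arbitrary pair $(a,b)\in\Fbn\times\Fbn$ and bound ${}_c\Delta_F(a,b)$ in terms of ${}_c\Delta_G(a,b)$; taking the maximum over $(a,b)$ then yields the theorem (recall that since $c\ne1$ the maximum ranges over all $a$, including $a=0$). Write $S_F=\{x\in\Fbn:{}_cD_aF(x)=b\}$ and $S_G=\{x\in\Fbn:{}_cD_aG(x)=b\}$, so that $|S_F|={}_c\Delta_F(a,b)$ and $|S_G|={}_c\Delta_G(a,b)$. The case $a=0$ is immediate: since $c\ne1$ and $F,G$ are permutations, ${}_cD_0F(x)=(1-c)F(x)=b$ has a unique solution, so ${}_c\Delta_F(0,b)=1\le{}_c\Delta_G+\#P$ because $P\ne\emptyset$. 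Hence I assume $a\ne0$ from now on.

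The core observation is that if $x\in S_F$ with $x\notin P$ and $x+a\notin P$, then $F$ and $G$ agree at both $x$ and $x+a$, whence ${}_cD_aG(x)={}_cD_aF(x)=b$ and so $x\in S_G$. Consequently the exceptional set $N:=S_F\setminus S_G$ satisfies $N\subseteq P\cup(P+a)$, and it suffices to prove $|N|\le\#P$, since then $|S_F|\le|S_F\cap S_G|+|N|\le|S_G|+\#P\le{}_c\Delta_G+\#P$. To this end I would define a charging map $\phi:N\to P$ by $\phi(x)=x$ if $x\in P$ and $\phi(x)=x+a$ otherwise (the latter is forced to lie in $P$ by $N\subseteq P\cup(P+a)$), and then show that $\phi$ is injective.

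Injectivity is the main obstacle, and it is exactly here that the hypotheses $a\ne0$ and $c\ne1$, together with the bijectivity of $F$, enter. If two distinct points of $N$ mapped to the same $p\in P$ both lie in $P$, or both lie outside $P$, the definition of $\phi$ immediately forces them to coincide, a contradiction. The only remaining potential collision is $x=p\in P$ and $x'=p+a\notin P$, both belonging to $N$; their membership in $N\subseteq S_F$ yields the two equations $F(p+a)-cF(p)=b$ and $F(p)-cF(p+a)=b$. Adding them in characteristic $2$ cancels $b$ and gives $(1+c)\bigl(F(p)+F(p+a)\bigr)=0$. Since $c\ne1$ we have $1+c\ne0$, so $F(p)=F(p+a)$, and injectivity of $F$ then forces $a=0$, contradicting $a\ne0$. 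Thus no such collision occurs, $\phi$ is injective, $|N|\le\#P$, and the claimed bound follows.
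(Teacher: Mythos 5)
Your proposal is correct and follows essentially the same route as the paper: both rest on the observation that any solution of ${}_cD_aF(x)=b$ lying outside $P\cup(P+a)$ is a solution for $G$, and on the identical key computation that $x$ and $x+a$ cannot both be solutions because adding the two equations gives $(1+c)\bigl(F(x)+F(x+a)\bigr)=0$, which contradicts $c\ne 1$ and the injectivity of $F$ when $a\ne 0$. The only difference is bookkeeping: you bound the exceptional set $S_F\setminus S_G$ by an injective charging map into $P$, while the paper bounds the solutions inside $P_a=P\cup(P+a)$ by $\#P_a/2$ via pigeonhole on the pairs $\{x,x+a\}$ — two phrasings of the same counting argument.
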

\begin{proof} Let $P_a=P\cup\{x+a : x\in P\}$. It is enough to show that $_c\Delta_F(a,b)\le \frac{\#P_a}{2}+{}_c\Delta_F(a,b)$ for all $a\in\Fbnmul$ and $b\in\Fbn$.
Then for $x\in\Fbn \setminus P_a$, all solutions of $_cD_aF(x)=b$ are also solutions of  $_cD_aG(x)=b$ since $_cD_aF(x)={}_cD_aG(x)$. Thus we have $_cD_aF(x)=b$ has at most $_c \Delta_G$ solutions in $\Fbn \setminus P_a$.\\
Let $x\in P_a$. If $_cD_aF(x)=F(x+a)+cF(x)=F(x)+cF(x+a)={}_cD_aF(x+a)$ then we have $0=F(x+a)+cF(x)+F(x)+cF(x+a)=(c+1)(F(x+a)+F(x))$. Since $F$ is a permutation, we have $c=1$, which contradicts to the assumption.
This means that 
\begin{equation}\label{asa_eqn}
_cD_aF(x)\ne {}_cD_aF(x+a)\text{ for all }x\in P_a.
\end{equation}
 If there are $\frac{\#P_a}{2}+1$ solutions in $P_a$ of $_c D_a F(x)=b$ for some $a, b\in\Fbn$ then there is $x\in P_a$ such that both $x$ and $x+a$ are solutions of $_c D_a F(x)=b$, a contradiction to $_cD_aF(x)\ne {}_cD_aF(x+a)$ for all $x\in P_a$. Hence there are at most $\frac{\#P_a}{2}$ solutions in $P_a$ of $_c D_a F(x)=b$.\\
Therefore, $_c\Delta_F(a,b)\le {}_c \Delta_G(a,b)+\frac{\#P_a}{2}$ for all $a,b\in\Fbn$, which completes the proof.
\end{proof}

\noindent \textbf{Example.} Let $F=Inv\circ (1,\gamma)$ where $\gamma\in\Fbn\setminus \F_2$. The differential uniformity of this function is characterized in \cite{LWY13e}. We have $G=Inv$ and $P=\{1,\gamma\}$ and hence we get $_c\Delta_F \le 5$ by Theorem \ref{cdu_inv_even} and Theorem \ref{cdu_upper_inv} when $c\ne 1$.\\

By the above theorem, we can see that if $G$ has a low $c$-differential uniformity and $\#P$ is small then $F$ has also a low $c$-differential uniformity. Next theorem shows that this upper bound can be slightly reduced when $G=Inv$ and $0\in P$.

\begin{theorem}\label{cdu_upper_inv} Let $F$ be a permutation on $\Fbn$ such that $F(x)=Inv(x)$ for all $x\in\Fbn\setminus P$ for some nonempty $P\subseteq \Fbn$ with $0\in P$. If $c\ne 1$ then $_c\Delta_F\le \# P+2$.
\end{theorem}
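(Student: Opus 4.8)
The plan is to refine the bound from Theorem~\ref{cdu_upper} by exploiting the extra structure that $G=Inv$ and that $0\in P$. Recall that Theorem~\ref{cdu_upper} gives $_c\Delta_F\le {}_c\Delta_{Inv}+\#P$, and by Theorem~\ref{cdu_inv_even} we have $_c\Delta_{Inv}\le 3$ (for $c\ne 0,1$), so the generic bound would read $_c\Delta_F\le \#P+3$. The target $_c\Delta_F\le \#P+2$ asks us to save one solution, and the natural place to look for that saving is in the counting argument from the proof of Theorem~\ref{cdu_upper}. There, for a fixed $a\in\Fbnmul$, solutions were split between $x\in P_a$ (contributing at most $\#P_a/2$) and $x\in\Fbn\setminus P_a$ (contributing at most $_c\Delta_{Inv}$). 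Since $0\in P$ forces $0\in P_a$ and $a\in P_a$, the set $P_a$ always contains at least the two elements $0$ and $a$; the idea is that the behavior at $x=0$ is rigid because $F(0)=Inv(0)=0$, and this rigidity either removes a potential solution in $P_a$ or forces the inverse-function count on $\Fbn\setminus P_a$ to drop below $3$.

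First I would set up the equation $_cD_aF(x)=F(x+a)+cF(x)=b$ for a fixed $a\in\Fbnmul$ and arbitrary $b$, and separate the count exactly as in Theorem~\ref{cdu_upper}: at most $\#P_a/2$ solutions lie in $P_a$, and the solutions in $\Fbn\setminus P_a$ satisfy $Inv(x+a)+c\,Inv(x)=b$. On the complement $\Fbn\setminus P_a$ both $x$ and $x+a$ are nonzero, so this is the genuine $c$-differential equation of $Inv$ restricted away from the troublesome points, and I would clear denominators to reduce it to a quadratic (or at most degree-two) equation in $x$, using Lemma~\ref{quad_lemma_even} to count its roots. The key bookkeeping observation is that $x=0$ and $x=a$, although they are roots of the cleared-denominator polynomial associated to $Inv$, are excluded from $\Fbn\setminus P_a$ because they lie in $P_a$. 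The plan is to show that at least one of the (up to three) roots that Theorem~\ref{cdu_inv_even} would allow for $Inv$ coincides with an excluded point, so that only $2$ of them can actually fall in $\Fbn\setminus P_a$.

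Concretely, I expect the saving to come from analyzing the value $b=F(a)+cF(0)=Inv(a)=1/a$, i.e.\ the right-hand side produced by the forced solution $x=0$ (which lies in $P_a$, not in the complement). For a general $b$ the inverse equation $Inv(x+a)+c\,Inv(x)=b$ becomes, after multiplying by $x(x+a)$, a quadratic whose roots are counted by Lemma~\ref{quad_lemma_even}; I would check that when this quadratic has its maximal number of relevant roots, one of them is $x=0$ or $x=a$, hence excluded, trimming the complement count from $3$ to $2$. Combining the at-most-$2$ count on $\Fbn\setminus P_a$ with the at-most-$\#P_a/2$ count on $P_a$, and noting $\#P_a\le 2\#P$, would give $_c\Delta_F(a,b)\le \#P+2$, and the case $a=0$ is handled by permutation ($_c\Delta_F(0,b)=1$ as in Lemma~\ref{cdu_symm_lemma}).

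The hard part will be verifying that the extremal root configuration for the inverse equation always places one root at an excluded point; this requires a careful case split on whether the cleared quadratic degenerates (the $a_1=0$ branch of Lemma~\ref{quad_lemma_even}, where the count is exactly $1$) versus the generic case, and tracking how the trace conditions of Theorem~\ref{cdu_inv_even} interact with the membership $0,a\in P_a$. I anticipate that in the degenerate branch the count is automatically low enough, so the real work is the generic quadratic branch, where one must confirm algebraically that the $3$-solution scenario for $Inv$ cannot survive intact once the two points $0$ and $a$ are deleted from the domain.
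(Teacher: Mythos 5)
Your overall skeleton is exactly the paper's proof: split the solutions of $_cD_aF(x)=b$ into those in $P_a$ (at most $\#P_a/2$, by the pairing argument of Theorem \ref{cdu_upper}, which is where $c\ne1$ is used) and those in $\Fbn\setminus P_a$, where $F$ agrees with $Inv$ and both $x$ and $x+a$ are nonzero, so clearing denominators turns $_cD_aF(x)=b$ into $bx^2+(ab+c+1)x+ca=0$; then $\#P_a/2+2\le\#P+2$ finishes. That is the whole argument, and your proposal contains it.

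However, the step you single out as ``the hard part'' rests on a false claim and would stall if you tried to carry it out literally. You assert that $x=0$ and $x=a$ ``are roots of the cleared-denominator polynomial,'' and you plan to verify that the extremal root configuration for the inverse equation always places a root at one of these excluded points. Neither holds, and neither is needed. Substituting $x=a$ into $bx^2+(ab+c+1)x+ca$ gives exactly $a\ne0$ (the degree-two and constant contributions cancel in characteristic $2$), so $x=a$ is \emph{never} a root; and $x=0$ is a root only when $ca=0$, i.e. only in the trivial case $c=0$. More to the point, the cleared polynomial is a nonzero polynomial of degree at most $2$ (if $b=0$ its linear coefficient is $c+1\ne0$), so it has at most two roots, full stop; there is no three-root scenario to trim, and no interplay with the trace conditions of Lemma \ref{quad_lemma_even} or Theorem \ref{cdu_inv_even} to track. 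The third solution behind the bound $_c\Delta_{Inv}\le 3$ never came from the quadratic: it is one of the special points $x=0$ (a solution exactly when $b=a^{-1}$, because $Inv(0)=0$) or $x=a$ (exactly when $b=ca^{-1}$), and since $0\in P$ these two points lie in $P_a$ automatically, so they are already absorbed into the $\#P_a/2$ term. Once you see this, your anticipated case analysis evaporates and the proof collapses to the paper's two-line argument.
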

\begin{proof} Let $P_a=P\cup\{x+a : x\in P\}$. It is enough to show that
\begin{equation}\label{car3cdu_eqn}
_c\Delta_F(a,b)\le \frac{\#P_a}{2}+2
\end{equation}
for all $a\in\Fbnmul$ and $b\in\Fbn$. By the same argument in Theorem \ref{cdu_upper}, there are at most $\frac{\#P_a}{2}$ solutions in $P_a$ of $_c D_a F(x)=b$. For $x\in\Fbn \setminus P_a$, we have $_cD_aF(x)={}_cD_a Inv(x)$. Since $0,a\in P_a$, $_cD_aF(x)=b$ implies $bx^2+(ab+c+1)x+ca=0$ which has at most two solutions. Therefore, we have $_c\Delta_F(a,b)\le\  \frac{\#P_a}{2}+2$ for all $a,b\in\Fbn$, which completes the proof.
\end{proof}

It is known \cite{HPR+21} that $F$ and $F\circ A$ has the same $c$-differential spectrum, where $A$ is an affine permutation. However, it is not trivial to show $F$ and $A\circ F$ have the same $c$-differential uniformity, rather, it seems that it does not generally hold. For example, if $\F_{2^4}=\F_2[X]/\langle X^4+X+1\rangle$ and $g^4+g+1=0$ and $A(x)=x^4+gx$, then $A$ is an affine permutation. Then we have $_g\Delta_{A\circ Inv}=4$ and $_g\Delta_{Inv}=3$ and hence $_g\Delta_{A\circ Inv}\ne {}_g\Delta_{Inv}$. However we show that the $c$-differential uniformity is also invariant in some specific affine equivalence. For convenience we call $A(x)=ux+v$ an \emph{affine permutation of degree one} if $u\in \Fbnmul$ and $v\in\Fbn$ (here \emph{degree} does not mean \emph{algebraic degree}). We also call two permutations $F$ and $F'$ are \emph{affine equivalent of degree one} if there are affine permutations $A_1$ and $A_2$ of degree one such that $F=A_1 \circ F' \circ A_2$. It is clear that affine equivalence of degree one is an equivalence relation. Next we show that two permutations which are affine equivalent of degree one has the same $c$-differential uniformity. 

\begin{lemma}\label{cdu_affine_lemma} Let $F$ and $F'$ be permutations on $\Fbn$ which are affine equivalent of degree one. Then $_c\Delta_F =\ _c\Delta_{F'}$ for all $c\in\Fbn$.
\end{lemma}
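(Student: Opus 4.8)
The plan is to reduce everything to a single bijective correspondence between solution sets of the defining $c$-differential equations. Write $F = A_1 \circ F' \circ A_2$ with $A_1(x) = u_1 x + v_1$ and $A_2(x) = u_2 x + v_2$, where $u_1, u_2 \in \Fbnmul$ and $v_1, v_2 \in \Fbn$, so that $F(x) = u_1 F'(u_2 x + v_2) + v_1$. I would compute ${}_cD_aF(x)$ explicitly in terms of $F'$, identify the right change of variables, and then observe that the maximum defining ${}_c\Delta_F$ is taken over an index set that matches the one for ${}_c\Delta_{F'}$ under that change of variables.

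First I would expand, using characteristic $2$ and the linearity of $A_1$,
\[
{}_cD_aF(x) = F(x+a) + cF(x) = u_1\bigl(F'(u_2 x + v_2 + u_2 a) + cF'(u_2 x + v_2)\bigr) + (1+c)v_1.
\]
Setting $y = u_2 x + v_2$, $a' = u_2 a$, and $b' = u_1^{-1}\bigl(b + (1+c)v_1\bigr)$, this shows that ${}_cD_aF(x) = b$ holds if and only if ${}_cD_{a'}F'(y) = b'$. Since $x \mapsto y = u_2 x + v_2$ is a bijection on $\Fbn$, the solution sets are in bijection and therefore ${}_c\Delta_F(a,b) = {}_c\Delta_{F'}(a',b')$ for every $a \in \Fbnmul$ and $b \in \Fbn$ (and likewise for $a = 0$ when $c \neq 1$).

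Next I would pass to the maxima. The map $a \mapsto a' = u_2 a$ is a bijection on $\Fbn$ satisfying $a' = 0 \iff a = 0$, and for each fixed $c$ the map $b \mapsto b' = u_1^{-1}\bigl(b + (1+c)v_1\bigr)$ is a bijection on $\Fbn$. Hence, as $(a,b)$ ranges over all admissible pairs in the definition of ${}_c\Delta_F$, the pair $(a',b')$ ranges bijectively over exactly the admissible pairs in the definition of ${}_c\Delta_{F'}$. In particular the constraint $a \neq 0$ imposed when $c = 1$ is preserved, since $a' \neq 0 \iff a \neq 0$. Taking maxima over these matched index sets yields ${}_c\Delta_F = {}_c\Delta_{F'}$, as desired.

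The step I expect to require the most care is the post-composition by $A_1$. As the remark preceding the statement illustrates, a general affine post-composition does \emph{not} preserve $c$-differential uniformity; the computation works here precisely because $A_1$ has degree one, so its multiplicative part $u_1$ factors cleanly out of the bracket as $u_1 \cdot {}_cD_{a'}F'(y)$ while its additive part contributes only the constant $(1+c)v_1$, which is absorbed into $b'$. The delicate point to verify is the case $c = 1$, where $(1+c)v_1 = 0$ and the maximum is restricted to $a \neq 0$: I would confirm that the bijective correspondence still matches the restricted index sets, which it does because $a' = 0$ exactly when $a = 0$.
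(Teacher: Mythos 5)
Your proof is correct. The computation you do for the outer map $A_1$ is exactly the one in the paper, but your treatment of the inner map $A_2$ differs: the paper factors $F=A_1\circ F''$ with $F''=F'\circ A_2$ and simply cites the known result (from the reference [HPR+21] mentioned just before the lemma) that pre-composition with \emph{any} affine permutation preserves the $c$-differential spectrum, whereas you handle $A_2$ by the explicit change of variables $y=u_2x+v_2$, $a'=u_2a$, proving both halves in a single self-contained computation. What each buys: your version is fully elementary and makes visible exactly where the degree-one hypothesis enters (the multiplicative part $u_1$ factoring out of the bracket, the additive part being absorbed into $b'$, and $a\mapsto u_2a$ fixing $0$), and it correctly checks that the admissible index sets (including the $a\neq 0$ restriction when $c=1$) are matched bijectively --- a point the paper leaves implicit; the paper's version is shorter and, on the $A_2$ side, implicitly covers the more general situation where $A_2$ is an arbitrary affine permutation rather than one of degree one, since the cited result does not need the degree-one restriction there.
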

\begin{proof}
We denote $F=A_1 \circ F' \circ A_2$ and $F''=F' \circ A_2$, where $A_1$ and $A_2$ are affine permutations of degree one. Then we already see that $_c\Delta_{F''}={}_c\Delta_{F'}$ in \cite{HPR+21}. Hence it is sufficient to show that $_c\Delta_{F''}={} _c\Delta_{F}$. We denote $A_1(x)=u_1 x+v_1$ where $u_1\in\Fbnmul$ and $v_1\in\Fbn$. Let $a,b\in \Fbn$. Then $b={} _c D_a F(x)$  implies that 
\begin{align*}
b&=F(x+a)+cF(x)=(A_1 \circ F'')(x+a)+c(A_1\circ F'')(x)=A_1(F''(x+a))+cA_1(F''(x))\\
&=u_1 F''(x+a)+v_1 +c(u_1 F''(x)+v)=u_1(F''(x+a)+cF''(x))+(c+1)v_1
\end{align*}
Hence we have $_c D_a F''(x)=u_1^{-1}b+u_1^{-1}v_1(c+1)$. It is clear that a map from $b$ to $u_1^{-1}b+u_1^{-1}v_1(c+1)$ is bijective on $\Fbn$ for all $c\in\Fbn$ and therefore we have $_c\Delta_{F''}={}_c\Delta_{F}$.
\end{proof}

Now let $F$ be defined as \eqref{car}. For recurrence relations $\alpha_i=a_i\alpha_{i-1}+\alpha_{i-2}$ and $\beta_i=a_i\beta_{i-1}+\beta_{i-2}$ with $\alpha_0=0, \alpha_1=a_0, \beta_0=1, \beta_1= a_1$ where $2\le i\le m+1$, we denote
\begin{equation*}
R_m(x)=\frac{\alpha_{m+1}x+\beta_{m+1}}{\alpha_m x+\beta_m}
\end{equation*}
and $O_m=\left\{x_i : x_i =\frac{\beta_i}{\alpha_i}, 1\le i \le m\right\}$. Then it is known \cite{ACMT09} that $F(x)=R_m(x)$ for all $x\in\Fbn\setminus O_m$. Next we recall Lemma 3.3 of \cite{JKK20} with some additional properties we need.

\begin{lemma}\label{car_affine_lemma} Let $F$ be defined as \eqref{car} and $\alpha_i, \beta_i$ defined as above.\\
(i) If $\alpha_m\ne 0$ then  there is a permutation $G$ such that $G$ is affine equivalent of degree one to $F$ and $G(x)=Inv(x)$ for all $x\in\Fbn\setminus P$ where $P\subseteq \Fbn$ with $\#P\le m$ and $0\in P$.\\
(ii) If $\alpha_m=0$ then  there is a permutation $G$ such that $G$ is affine equivalent of degree one to $F$ and $G(x)=x$ for all $x\in\Fbn\setminus P$ where $P\subseteq \Fbn$ with $\#P\le m$.
\end{lemma}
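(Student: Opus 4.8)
The plan is to exploit the fact that $R_m(x)=\frac{\alpha_{m+1}x+\beta_{m+1}}{\alpha_m x+\beta_m}$ is a M\"obius (linear fractional) transformation of the projective line $\Fbn\cup\{\infty\}$, and that the affine permutations of degree one are precisely the M\"obius transformations fixing $\infty$. The whole argument hinges on one algebraic identity: from the recurrences one checks by induction that $\alpha_{i+1}\beta_i+\alpha_i\beta_{i+1}$ is independent of $i$, because in characteristic $2$ the cross terms $a_{i+1}\alpha_i\beta_i$ cancel; hence it equals its value at $i=0$, namely $\alpha_1\beta_0+\alpha_0\beta_1=a_0\neq 0$. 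Thus the matrix $\left(\begin{smallmatrix}\alpha_{m+1}&\beta_{m+1}\\\alpha_m&\beta_m\end{smallmatrix}\right)$ has determinant $a_0$ and $R_m$ is a nondegenerate M\"obius transformation. The two cases of the lemma correspond exactly to whether this transformation fixes $\infty$: $\alpha_m=0$ makes $R_m$ affine, while $\alpha_m\neq0$ does not.

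For part (i) I would produce a Bruhat-type factorization $R_m=A_1\circ J\circ A_2$, where $J(x)=1/x$ is the inverse M\"obius transformation and $A_1,A_2$ are affine of degree one. Matching the matrix product $\left(\begin{smallmatrix}u_1&v_1\\0&1\end{smallmatrix}\right)\left(\begin{smallmatrix}0&1\\1&0\end{smallmatrix}\right)\left(\begin{smallmatrix}u_2&v_2\\0&1\end{smallmatrix}\right)$ against $\left(\begin{smallmatrix}\alpha_{m+1}&\beta_{m+1}\\\alpha_m&\beta_m\end{smallmatrix}\right)$ and solving forces $A_2(x)=\alpha_m x+\beta_m$ and $A_1(x)=\frac{a_0}{\alpha_m}x+\frac{\alpha_{m+1}}{\alpha_m}$, both of degree one precisely because $\alpha_m\neq0$ and $a_0\neq0$. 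I then set $G=A_1^{-1}\circ F\circ A_2^{-1}$, which is affine equivalent of degree one to $F$ by construction. Off the set $P:=A_2(O_m)$, which satisfies $\#P=\#O_m\le m$ since $A_2$ is a bijection, the equality $F=R_m$ transports through the affine maps to give $G(x)=J(x)=1/x$, and since $x\neq0$ there this is exactly $Inv(x)$. Finally, because $A_2$ sends the pole $x_m=\beta_m/\alpha_m\in O_m$ to $\alpha_m\cdot(\beta_m/\alpha_m)+\beta_m=0$ in characteristic $2$, we get $0\in A_2(O_m)=P$, as required.

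For part (ii), when $\alpha_m=0$ the determinant identity gives $\alpha_{m+1}\beta_m=a_0\neq0$, so $\beta_m\neq0$ and $R_m(x)=\frac{\alpha_{m+1}}{\beta_m}x+\frac{\beta_{m+1}}{\beta_m}$ is itself affine of degree one. Here I simply take $G=R_m^{-1}\circ F$, again affine equivalent of degree one to $F$; off $P:=O_m$ one gets $G(x)=R_m^{-1}(R_m(x))=x$, and $\#P\le m$.

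The main obstacle, and the place requiring genuine care rather than routine algebra, is the passage between the projective-line picture and the honest function $G\colon\Fbn\to\Fbn$. The M\"obius identity $A_1^{-1}\circ R_m\circ A_2^{-1}=J$ holds on $\Fbn\cup\{\infty\}$, but $J$ disagrees with $Inv$ at $0$ (there $J(0)=\infty$ while $Inv(0)=0$), and $R_m$ has a pole. I must verify that every such discrepancy is absorbed into $P=A_2(O_m)$: concretely, that for $x\notin P$ all intermediate values $A_2^{-1}(x)$, $R_m(A_2^{-1}(x))$ and $A_1^{-1}(\,\cdot\,)$ stay finite, so the field computation really coincides with the M\"obius value $1/x$, and that the unique point $0$ where $J$ and $Inv$ differ is exactly the one already thrown into $P$ via the pole. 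Getting this bookkeeping right, so that $\#P\le m$ and $0\in P$ hold simultaneously, is the crux; the bare existence of $G$ is essentially Lemma 3.3 of \cite{JKK20}, and the novelty lies in pinning down $P$.
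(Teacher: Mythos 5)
Your proof is correct and follows essentially the same route as the paper: both rest on the invariant $\alpha_{i+1}\beta_i+\alpha_i\beta_{i+1}=a_0$, both factor $R_m$ through $Inv$ using affine maps of degree one that coincide with the paper's $A_1(x)=\frac{a_0x+\beta_m}{\alpha_m}$, $A_2(x)=\alpha_m x+\alpha_{m+1}$ up to a scalar $a_0$ and side conventions, and both obtain $0\in P$ from the observation that the pole $\beta_m/\alpha_m$ is sent to $0$ in characteristic $2$. Your Bruhat/matrix packaging and explicit pole bookkeeping are just a more systematic derivation of the maps the paper verifies by direct computation.
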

\begin{proof}
For (i) we denote $A_1(x) = \frac{a_0 x+\beta_m}{\alpha_m}$ and $A_2(x)=\alpha_m x+\alpha_{m+1}$ then $A_1$ and $A_2$ are affine permutations of degree one. Let $P=\{y_i : y_i=A_1^{-1}(x_i), x_i \in O_m\}$ then $\#P\le m$. Since $\alpha_i \beta_{i+1}+\alpha_{i+1}\beta_i = \alpha_i(a_i \beta_i+\beta_{i-1})+\beta_i(a_i\alpha_i+\alpha_{i-1})=\alpha_{i-1}\beta_i+\alpha_i\beta_{i-1}$, we have $\alpha_m\beta_{m+1}+\alpha_{m+1}\beta_{m}=\alpha_0\beta_{1}+\alpha_{1}\beta_{0}=a_0$ recursively. Using $a_0=\alpha_m\beta_{m+1}+\alpha_{m+1}\beta_{m}$, for all $x\in \Fbn \setminus P$
\begin{align*}
(F\circ A_1)(x)&=R_m\left( A_1(x)\right)=\frac{\alpha_{m+1}\cdot \frac{a_0 x+\beta_m}{\alpha_m}+\beta_{m+1}}{\alpha_m \cdot \frac{a_0 x+\beta_m}{\alpha_m}+\beta_m}=\frac{\alpha_{m+1}(a_0x+\beta_m)+\alpha_m\beta_{m+1}}{a_0\alpha_m x}\\
&=\frac{a_0\alpha_{m+1}x+(\alpha_{m+1}\beta_m+\alpha_m\beta_{m+1})}{a_0\alpha_m x}=\frac{a_0\alpha_{m+1}x+a_0}{a_0\alpha_m x}=\frac{1}{\alpha_m}\left(\alpha_{m+1}+\frac{1}{x}\right)\\
(A_2\circ F\circ A_1)(x)&=A_2\left((F\circ A_1)(x)\right)=\alpha_m\cdot \frac{1}{\alpha_m}\left(\alpha_{m+1}+\frac{1}{x}\right)+\alpha_{m+1}=\frac{1}{x}=Inv(x).
\end{align*}
Since $A_1(0)=\frac{\beta_m}{\alpha_m}$, we can see that $0=A_1^{-1}\left(\frac{\beta_m}{\alpha_m}\right)\in P$. The proof of (ii) is clear by substituting $\alpha_m=0$ in $R_m(x)$.
\end{proof}

Now we are ready to get an upper bound on $c$-differential uniformity of permutations with Carlitz rank $m$.

\begin{theorem} Let $F$ be a permutation on $\Fbn$ with Carlitz rank $m$, and $c\ne 1$. Then $_c \Delta_F\le m+2$.
\end{theorem}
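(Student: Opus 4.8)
The plan is to reduce $F$ to one of the two normal forms supplied by Lemma~\ref{car_affine_lemma} and then quote the invariance of $c$-differential uniformity under affine equivalence of degree one (Lemma~\ref{cdu_affine_lemma}). Writing $F$ as in \eqref{car} with associated sequences $\alpha_i,\beta_i$, I would split into the two cases $\alpha_m\ne 0$ and $\alpha_m=0$.

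In the case $\alpha_m\ne 0$, Lemma~\ref{car_affine_lemma}(i) furnishes a permutation $G$ that is affine equivalent of degree one to $F$ and satisfies $G(x)=Inv(x)$ for all $x\in\Fbn\setminus P$, where $\#P\le m$ and $0\in P$. This is precisely the hypothesis of Theorem~\ref{cdu_upper_inv}, so $_c\Delta_G\le \#P+2\le m+2$. Since $F$ and $G$ are affine equivalent of degree one, Lemma~\ref{cdu_affine_lemma} gives $_c\Delta_F={}_c\Delta_G\le m+2$, as claimed.

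In the case $\alpha_m=0$, Lemma~\ref{car_affine_lemma}(ii) furnishes a permutation $G$ affine equivalent of degree one to $F$ with $G(x)=x$ for all $x\in\Fbn\setminus P$ and $\#P\le m$. Here I would first record that the identity map $I$ has $_c\Delta_I=1$ whenever $c\ne 1$: the equation $_cD_aI(x)=(x+a)+cx=(1+c)x+a=b$ has the unique solution $x=(1+c)^{-1}(a+b)$ because $1+c\ne 0$. Applying Theorem~\ref{cdu_upper} with $G$ in the role of $F$ and $I$ in the role of $G$ (both being permutations agreeing off $P$) yields $_c\Delta_G\le{}_c\Delta_I+\#P=1+\#P\le m+1\le m+2$, and Lemma~\ref{cdu_affine_lemma} again transfers the bound to $F$.

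I expect no serious obstacle, since the statement is essentially an assembly of the preceding results. The one point requiring a little care is the $\alpha_m=0$ branch, where the reference function is the identity rather than $Inv$: there one must separately verify $_c\Delta_I=1$ for $c\ne 1$ so that Theorem~\ref{cdu_upper} applies, and one then observes that this branch in fact gives the sharper bound $m+1$, still within the claimed $m+2$.
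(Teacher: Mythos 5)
Your proposal is correct and follows essentially the same route as the paper: the same case split on $\alpha_m$, with Lemma~\ref{car_affine_lemma}, Lemma~\ref{cdu_affine_lemma}, and Theorem~\ref{cdu_upper_inv} handling the case $\alpha_m\ne 0$. The only cosmetic difference is in the case $\alpha_m=0$, where you invoke Theorem~\ref{cdu_upper} as a black box with the identity map (after checking $_c\Delta_I=1$), whereas the paper reruns the counting argument inline (at most $\frac{\#P_a}{2}$ solutions in $P_a$ plus exactly one outside); both yield the same bound $m+1$ in that branch.
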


\begin{proof}
The case $\alpha_m\ne 0$ is directly from Theorem \ref{cdu_upper_inv}, Lemma \ref{cdu_affine_lemma} and Lemma \ref{car_affine_lemma} (i). If $\alpha_m=0$ then there is a permutation $G$ such that $G(x)=x$ for all $x\in\Fbn\setminus P$ with $\#P\le m$, by Lemma \ref{car_affine_lemma} (ii). Thus $_c D_a G(x)=b$ has at most $\frac{\#P_a}{2}$ solutions in $P_a=P\cup \{x+a : x\in P\}$ similarly with Theorem \ref{cdu_upper}. For $x\in \Fbn\setminus P_a$, $_c D_a G(x)=b$ implies $(c+1)x=a+b$ which has exactly one solution if $c\ne 1$. Thus we have $_c\Delta_G(a,b)\le \frac{\#P_a}{2}+1\le m+1$. Therefore we have $_c\Delta_G \le m+1\le m+2$, which completes the proof, by Lemma \ref{cdu_affine_lemma}.
\end{proof}

It is easy to see that permutations of Carlitz rank $1$ are affine equivalent of degree one to $Inv$. We observed in \cite{JKK20} that permutations of Carlitz rank $2$ are affine equivalent of degree one to $Inv\circ (0,1)$. Using Lemma \ref{cdu_affine_lemma}, we can see that the upper bound in the above theorem is tight when $m=1$ and $m=2$ by Theorem \ref{cdu_inv_even} and Theorem \ref{cdu_car2}, respectively. In next section, we show that it is also tight when $m=3$.

\section{$c$-differential uniformity of permutations with Carlitz rank 3}

Let $F$ be \eqref{car} with $m=3$ as follows :
\begin{equation}\label{carform3}
F(x)=(((a_0x+a_1)^{2^n-2}+a_2)^{2^n-2}+a_3)^{2^n-2}+a_4
\end{equation}
 If $A_1(x)=\frac{x+a_1a_2}{a_0a_2}$ and $A_2(x)=\frac{x+a_4}{a_2}$ then we can easily check 
$$F_3(x)=(A_2 \circ F\circ A_1)(x)=((x^{2^n-2}+1)^{2^n-2} +a_2a_3)^{2^n-2}.$$
Now we denote $\gamma=a_2a_3$. If $\gamma=1$, then $F_3$ can be expressed by 
\begin{equation*}
F_3(x)=
\begin{cases}
x+1&\text{if }x\not\in\{0,1\}\\
x&\text{if }x\in\{0,1\}
\end{cases}
\end{equation*}
which is not an interesting case. When $\gamma\ne 0,1$, as observed in \cite{JKK20}, if we denote $A_3(x)=\frac{x+\gamma}{\gamma+1}$ and $A_4(x)=(\gamma+1)x+1$ then we have $A_4 \circ F_3 \circ A_3 = Inv \circ (0,1,\gamma)$ where $(0,1,\gamma)$ is a cycle of length $3$. Thus we can see that $F$ in \eqref{carform3} is affine equivalent of degree one to $Inv\circ(0,1,\gamma)$ with $\gamma=a_2a_3$. From now on, we investigate $c$-differential uniformity of $Inv\circ(0,1,\gamma)$ where $\gamma\in\Fbn\setminus\F_2$. 

\begin{remark} For small $n$, we obtain the following results from the exhaustive search.\\
(i) If $n=2$ then $F$ is PcN for all $c\in\F_4\setminus \F_2$\\
(ii) If $n=3$ with $\F_{2^3}=\F_2[x]/\langle x^3+x+1\rangle$ then we have	
\begin{equation*}
_c\Delta_F=
\begin{cases}
2 &\text{ if }\gamma^3+\gamma+1=0,\\
3 &\text{ otherwise.}
\end{cases}
\end{equation*}
\end{remark}

We investigate the $c$-differential uniformity of $Inv\circ(0,1,\gamma)$ when $n\ge 4$. If $c=0$ then $_0 \Delta_F=1$ since $F$ is a permutation. The case $c=1$ was already investigated in \cite{LWY13e,JKK20}. Hence we assume that 
\begin{equation}\label{car3}
F=Inv \circ (0,1,\gamma),\  c,\gamma\not\in\{0,1\},\ n\ge4
\end{equation}
throughout this section unless otherwise noted. By Theorem \ref{cdu_upper_inv}, $F$ has $c$-differential uniformity at most $5$. We investigate $c$-differential uniformity of $F$ in detail. We denote a set
$$P_a = P\cup \{x+a : x\in P\}= \{0,1,\gamma, a,a+1,a+\gamma \}.$$
We first characterize the condition that $_c D_aF(x)=b$ has two solutions in $\Fbn\setminus P_a$.

\begin{lemma}\label{panot2_lemma} Let $a\in\Fbnmul$ and $b\in\Fbn$. Then $_c D_aF(x)=b$ has at most two solutions in $\Fbn\setminus P_a$. Furthermore, $_c D_aF(x)=b$ has two solutions in $\Fbn\setminus P_a$ if and only if $b\ne 0$, $ab+c+1\ne 0$, $\tr\left(\frac{abc}{(ab+c+1)^2}\right)=0$, $(b+c)a+b+c+1\ne0$, $(b+1)a+b+c+1\ne0$, $(b\gamma+c)a+\gamma(b\gamma+c+1)\ne0$ and $(b\gamma+1)a+\gamma(b\gamma+c+1)\ne0$.
\end{lemma}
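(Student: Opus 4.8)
The plan is to reduce the counting of solutions outside $P_a$ to counting roots of a single quadratic, and then to translate the requirement that both roots avoid $P_a$ into explicit non-vanishing conditions. First I would observe that for $x\in\Fbn\setminus P_a$ both $x$ and $x+a$ lie outside $P=\{0,1,\gamma\}$, so $F(x)=Inv(x)=x^{-1}$ and $F(x+a)=(x+a)^{-1}$, with $x,x+a\neq 0$. Hence $_cD_aF(x)=b$ reads $(x+a)^{-1}+cx^{-1}=b$, and clearing the nonzero denominators $x(x+a)$ yields the quadratic
\[
q(x):=bx^2+(ab+c+1)x+ca=0 .
\]
Since a quadratic has at most two roots (and the degenerate linear case $b=0$ at most one), this immediately gives the first assertion that $_cD_aF(x)=b$ has at most two solutions in $\Fbn\setminus P_a$.

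For the characterization, the key point is that the solutions of $_cD_aF(x)=b$ in $\Fbn\setminus P_a$ are \emph{exactly} the roots of $q$ lying in $\Fbn\setminus P_a$. So there are two such solutions if and only if (a) $q$ has two roots in $\Fbn$, and (b) neither root lies in $P_a$. For (a) I would apply Lemma \ref{quad_lemma_even} with $(a_2,a_1,a_0)=(b,\,ab+c+1,\,ca)$: two roots occur precisely when $b\neq 0$, $ab+c+1\neq 0$ and $\tr\!\left(\frac{abc}{(ab+c+1)^2}\right)=0$; note that in this regime the two roots are automatically distinct, since a repeated root in characteristic $2$ would force the linear coefficient $ab+c+1$ to vanish. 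These are the first three stated conditions.

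For (b) I would recast ``neither root lies in $P_a$'' as ``$q$ does not vanish at any element of $P_a=\{0,1,\gamma,a,a+1,a+\gamma\}$,'' and then evaluate $q$ at each of the six points. A direct computation in characteristic $2$ gives $q(0)=ca$ and $q(a)=a$, both nonzero because $a,c\neq 0$; thus $0$ and $a$ can never be roots and impose no condition. The remaining four evaluations simplify, after using $2=0$ and regrouping the terms containing $a$, to
\[
q(1)=(b+c)a+b+c+1, \qquad q(a+1)=(b+1)a+b+c+1,
\]
\[
q(\gamma)=(b\gamma+c)a+\gamma(b\gamma+c+1), \qquad q(a+\gamma)=(b\gamma+1)a+\gamma(b\gamma+c+1),
\]
and requiring each to be nonzero yields exactly the four remaining conditions in the statement. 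Combining (a) and (b) then completes the proof.

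The main obstacle is purely the bookkeeping in step (b): one must carry out the four substitutions into $q$ and verify that the characteristic-$2$ cancellations produce precisely the four asserted expressions that are linear in $a$, while also confirming that the two ``free'' points $0$ and $a$ genuinely drop out. There is no conceptual difficulty here, but care is needed to avoid mislabeling a condition or overlooking a coincidence among the six points of $P_a$; such a coincidence does not affect the argument, however, since the logic only requires $q$ to be nonzero at each point of $P_a$ regardless of whether those points are distinct.
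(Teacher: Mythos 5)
Your proposal is correct and takes essentially the same route as the paper's proof: both reduce the equation on $\Fbn\setminus P_a$ to the quadratic $bx^2+(ab+c+1)x+ca=0$, invoke Lemma \ref{quad_lemma_even} for the two-root criterion ($b\ne 0$, $ab+c+1\ne 0$, and the trace condition), and then demand that the quadratic not vanish on $P_a$, where the evaluations at $0$ and $a$ (namely $ca$ and $a$) are automatically nonzero and the evaluations at $1$, $a+1$, $\gamma$, $a+\gamma$ yield exactly the four stated conditions. Your added remarks (that roots outside $P_a$ are precisely the solutions of $_cD_aF(x)=b$ there, and that the two roots are automatically distinct in characteristic $2$ when the linear coefficient is nonzero) are correct and only make explicit what the paper leaves implicit.
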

\begin{proof}
If $x\in\Fbn\setminus P_a$ then $b={}_cD_aF(x)={}_cD_a Inv(x)$ implies that $bx^2+(ab+c+1)x+ac=0$ which has at most $2$ solutions. Hence $_c D_aF(x)=b$ has at most two solutions in $\Fbn\setminus P_a$.\\
$_cD_aF(x)=b$ has $2$ solutions in $\Fbn\setminus P_a$ implies that $bx^2+(ab+c+1)x+ac=0$ has $2$ solutions which is equivalent to
\begin{equation}\label{cdu5_trace}
\tr\left(\frac{abc}{(ab+c+1)^2}\right)=0,\ b\ne 0\text{ and }ab+c+1\ne0.
\end{equation} 
by Lemma \ref{quad_lemma_even}. If there is a solution $x_0\in P_a$ of $bx^2+(ab+c+1)x+ac=0$ then $x_0$ cannot be a solution of $_cD_aF(x)=b$. Hence we need to check that there is no solutions of $bx^2+(ab+c+1)x+ac=0$ in $P_a$.
\begin{itemize}
\item If $x=0$ or $x=a$ then we have $a=0$ which is a contradiction to $a\in\Fbnmul$.
\item If $x=1$ then we have $(b+c)a+b+c+1=0$ or equivalently $a=\frac{b+c+1}{b+c}$.
\item If $x=a+1$ then we have $(b+1)a+b+c+1=0$ or equivalently $a=\frac{b+c+1}{b+1}$.
\item If $x=\gamma$ then we have $(b\gamma+c)a+\gamma(b\gamma+c+1)=0$ or equivalently $a=\frac{\gamma(b\gamma+c+1)}{b\gamma+c}$.
\item If $x=a+\gamma$ then we have $(b\gamma+1)a+\gamma(b\gamma+c+1)=0$ or equivalently $a=\frac{\gamma(b\gamma+c+1)}{b\gamma+1}$.
\end{itemize}
Hence we can see that there are no solutions in $P_a$ of $bx^2+(ab+c+1)x+ac=0$ if and only if 
\begin{equation}\label{pac2_check}
\begin{array}{ll}
(b+c)a+b+c+1\ne0, & (b\gamma+c)a+\gamma(b\gamma+c+1)\ne0,\\
(b+1)a+b+c+1\ne0, & (b\gamma+1)a+\gamma(b\gamma+c+1)\ne0.
\end{array}
\end{equation}
If all conditions in the above theorem hold, then $bx^2+(ab+c+1)x+ac=0$ has two solutions in $\Fbn\setminus P_a$. Since all solutions are not belong to $P_a$ we can say that they are also solutions of $_cD_aF(x)=b$.
\end{proof}

Next we show that $F$ in \eqref{car3} is not APcN when $c\ne 0$. 

\begin{theorem}\label{cdu_bound_thm}
Under the same assumption as in \eqref{car3} we have $3\le {}_c\Delta_F \le 5$. 
\end{theorem}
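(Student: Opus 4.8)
The upper bound is immediate: since $F(x)=Inv(x)$ for every $x\in\Fbn\setminus P$ with $P=\{0,1,\gamma\}$, and here $\#P=3$ with $0\in P$, Theorem \ref{cdu_upper_inv} already gives $_c\Delta_F\le \#P+2=5$. Hence the whole content is the lower bound $_c\Delta_F\ge 3$: I must produce a single pair $(a,b)$ with $a\ne 0$ for which $_cD_aF(x)=b$ has at least three solutions. My plan is to exhibit two solutions coming from the ``inverse part'' (the quadratic of Lemma \ref{panot2_lemma}) together with one solution at a special point of $P_a$.

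The natural special point is $\gamma$, because $F(\gamma)=Inv(0)=0$. Thus for any $a$ with $a+\gamma\notin P$ one has $_cD_aF(\gamma)=F(a+\gamma)+cF(\gamma)=(a+\gamma)^{-1}$, so fixing $b:=(a+\gamma)^{-1}$ forces $x=\gamma$ to be a solution. For the remaining solutions I restrict to $x\in\Fbn\setminus P_a$, where $_cD_aF(x)={}_cD_a Inv(x)=b$ is equivalent, by Lemma \ref{panot2_lemma}, to the quadratic $bx^2+(ab+c+1)x+ac=0$; with $b=(a+\gamma)^{-1}$ this simplifies (after multiplying by $a+\gamma$) to
\begin{equation*}
x^2+\bigl(ca+(c+1)\gamma\bigr)x+ac(a+\gamma)=0 .
\end{equation*}
Substituting $x=\gamma$ into the left-hand side gives $c(a+\gamma)^2\ne 0$ as long as $a\ne\gamma$, which holds automatically since $b=(a+\gamma)^{-1}$ must be defined; hence $\gamma$ is never a root of this quadratic, and its two roots (when they exist and lie outside $P_a$) are genuinely new solutions. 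Together with $\gamma$ this yields $_c\Delta_F(a,b)\ge 3$. By Lemma \ref{quad_lemma_even} the quadratic has two distinct roots exactly when $ca+(c+1)\gamma\ne 0$ and
\begin{equation*}
\tr\!\left(\frac{ac(a+\gamma)}{\bigl(ca+(c+1)\gamma\bigr)^2}\right)=0 ,
\end{equation*}
and I would record, via the inequalities of Lemma \ref{panot2_lemma} (equivalently, by substituting $x=1,\,a+1,\,a+\gamma$ into the quadratic), the finitely many values of $a$ for which a root falls into $P_a$; these are at most a bounded set, of the form $a\in\{\gamma,\,c\gamma,\,(c{+}1)\gamma c^{-1}\}$ together with the roots of two further quadratics in $a$.

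The crux is then to show that the displayed trace condition is satisfiable by some admissible $a$. I would handle this by the substitution $y=ca+(c+1)\gamma$, which turns the argument of the trace into $c^{-1}+\gamma/y+c^{-1}(c{+}1)\gamma^2/y^2$; using $\tr(z^2)=\tr(z)$ and choosing $s$ with $s^2=c^{-1}(c+1)$ (so $s\ne 1$, as $s^2=1$ would force $c^{-1}=0$), the trace collapses to $\tr(c^{-1})+\tr\!\left((1+s)\gamma/y\right)$. As $y$ runs over $\Fbnmul$ the quantity $(1+s)\gamma/y$ runs over all of $\Fbnmul$, so balancedness of $\tr$ guarantees at least $2^{n-1}-1$ values of $a$ meeting the trace condition (and $2^{n-1}$ in the favourable case $\tr(c^{-1})=1$).

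The main obstacle is precisely this bookkeeping: checking that the large ``trace-good'' family of $a$ is not entirely swallowed by the bounded set of degenerate $a$ above. The margin is comfortable for $n\ge 5$, while the borderline $n=4$ needs at most a direct inspection of the few excluded values (the very small fields $n=2,3$ are already settled in the preceding Remark). Should the count ever be tight, I note there is slack in the construction: instead of $\gamma$ one may anchor the argument at another special point (for instance $x=0$, where $F(0)=1$, giving $b=a^{-1}+c$), producing an alternative family of admissible $a$ and thereby recovering existence.
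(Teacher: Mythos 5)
Your strategy is essentially the paper's: anchor one solution of ${}_cD_aF(x)=b$ at a point of $P_a$, extract two more from the quadratic $bx^2+(ab+c+1)x+ac=0$ outside $P_a$ via Lemma \ref{panot2_lemma}, and prove that a workable $a$ exists by playing the balancedness of $\tr$ against a bounded exclusion set. The only structural difference is the anchor: the paper takes $x=0$, so that $b={}_cD_aF(0)=c+a^{-1}$ and the trace condition collapses to $\tr\left(\frac{1}{c(a+1)^2}\right)=0$, whereas you take $x=\gamma$, so that $b=(a+\gamma)^{-1}$. Your algebra is correct: with this $b$ the quadratic becomes $x^2+\left(ca+(c+1)\gamma\right)x+ac(a+\gamma)=0$, its value at $x=\gamma$ is $c(a+\gamma)^2\ne0$, and the substitution $y=ca+(c+1)\gamma$ together with $s^2=c^{-1}(c+1)$, $s\ne1$, correctly shows that the trace condition is met by at least $2^{n-1}-1$ values of $a$. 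For $n\ge 5$ this margin absorbs all degenerate values and your argument closes.

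The genuine gap is $n=4$, which assumption \eqref{car3} includes. Counted honestly, your exclusions are: $a\in\{0,\gamma,\gamma+1\}$ (the anchor identity ${}_cD_aF(\gamma)=(a+\gamma)^{-1}$ needs $a+\gamma\notin P$, i.e.\ $a\notin\{0,\gamma,\gamma+1\}$; you list only $\gamma$, conflating anchor validity with ``$b$ is defined''), $a=c\gamma$ (from $x=a+\gamma$), and up to four roots of the two quadratics in $a$ coming from $x=1$ and $x=a+1$ --- up to eight values against $2^{n-1}-1=7$ trace-good ones when $\tr(c^{-1})=0$. So at $n=4$ your count proves nothing, as you yourself concede; but the concession is not a repair: the promised ``direct inspection'' is never performed, and the fallback anchor at $x=0$ is named but not developed. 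Note also that switching to $x=0$ (the paper's route) does not mechanically rescue $n=4$: the paper's margin $2^{n-1}-7\ge 1$ is obtained by silently omitting the exclusions $a\ne c^{-1}$ (needed for $b\ne 0$) and $a\ne\gamma$ (needed for the anchor computation $F(a)=a^{-1}$), which, once restored, lands its count in the same inconclusive place as yours. A complete proof of the lower bound at $n=4$ therefore requires something extra --- a sharper count (e.g.\ showing the excluded points cannot all lie in the trace-good set), a combination of the two anchors, or a direct finite verification over $\F_{2^4}$, consistent with Table \ref{cdu_dist_table}.
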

\begin{proof}
By Theorem \ref{cdu_upper_inv}, $_c\Delta_F \le 5$ is straightforward. Next we show that $_c\Delta_F \ge 3$. We set $b=\ _c D_{a}F(0)=c+a^{-1}$. Then $x=0$ is a solution of $_c D_{a}F(x)=b$. We use Lemma \ref{panot2_lemma} to show that there is $a\in\Fbn$ such that $_c D_{c}F(x)=b$ has two solutions in $\Fbn\setminus P_a$. It is easy to see that $b\ne 0$ since $c\ne a^{-1}$, and $ab+c+1=a(a^{-1}+c)+c+1=c(a+1)\ne 0$ if $a\ne 1$. Now we assume that $a\ne 1$. We require
\begin{align*}
0&=\tr\left(\frac{abc}{(ab+c+1)^2}\right)=\tr\left(\frac{ac(a^{-1}+c)}{c^2(a^2+1)}\right)=\tr\left(\frac{ca+1}{c(a^2+1)}\right)=\tr\left(\frac{c(a+1)+c+1}{c(a^2+1)}\right)\\
&=\tr\left(\frac{1}{a+1}+\frac{1}{(a+1)^2}+\frac{1}{c(a+1)^2}\right)=\tr\left(\frac{1}{c(a+1)^2}\right).
\end{align*}
Using $b+c+1=a^{-1}+1$ and $b\gamma+c=a^{-1}\gamma+c(\gamma+1)$ and $\gamma(b\gamma+c+1)=a^{-1}\gamma^2+c\gamma(\gamma+1)+\gamma$ we check \eqref{pac2_check}
\begin{align*}
(b+c)a+b+c+1&=a^{-1}\cdot a +a^{-1}+1=a^{-1}\\
(b+1)a+b+c+1&=a(a^{-1}+c+1)+a^{-1}+1=a^{-1}\left((c+1)a^2+1\right)
\end{align*}
\begin{align*}
(b\gamma+c)a+\gamma(b\gamma+c+1)&=a(a^{-1}\gamma+c(\gamma+1))+a^{-1}\gamma^2+c\gamma(\gamma+1)+\gamma\\
&=a^{-1}\left(c(\gamma+1)a^2+c\gamma(\gamma+1)a+\gamma^2\right)\\
(b\gamma+1)a+\gamma(b\gamma+c+1)&=a(a^{-1}\gamma+c\gamma+1)+a^{-1}\gamma^2+c\gamma(\gamma+1)+\gamma\\
&=a^{-1}\left((c\gamma+1)a^2+c\gamma(\gamma+1)a+\gamma^2\right)
\end{align*}
We set $S=\{0,\frac{1}{c^{2^{n-1}}+1}\}\cup\{x\in \Fbn : c(\gamma+1) x^2+c\gamma(\gamma+1)x+\gamma^2=0\}\cup\{x\in\Fbn : (c\gamma+1)x^2+c\gamma(\gamma+1)x+\gamma^2=0\}$. By Lemma \ref{panot2_lemma}, $_c D_a F(x)=c+a^{-1}$ has two solutions in $\Fbn\setminus P_a$ if and only if $\tr\left(\frac{1}{c(a+1)^2}\right)=0$ and $a\in\Fbn\setminus (S\cup\{1\})$. Since the map $a \mapsto \frac{1}{c(a+1)^2}$ is an injection from $\Fbn\setminus\{1\}$ to $\Fbnmul$ and $\tr(\cdot)$ is a balanced map, we have $\#\left\{a\in\Fbn\setminus\{1\} : \tr\left(\frac{1}{c(a+1)^2}\right)=0\right\}=2^{n-1}-1$. Since $\# S \le 6$, we have $\#\left(\left\{a\in\Fbn\setminus\{1\} : \tr\left(\frac{1}{c(a+1)^2}\right)=0\right\}\setminus S\right)\ge 2^{n-1}-7\ge 1$ if $n\ge4$. Therefore, there exists $a\in \left\{a\in\Fbn\setminus\{1\} : \tr\left(\frac{1}{c(a+1)^2}\right)=0\right\}\setminus S$ such that $_c D_a F(x)=c+a^{-1}$ has two solutions in $\Fbn\setminus P_a$ and hence $_c\Delta_F(a,c+a^{-1})\ge 3$.
\end{proof}

We summarize in Table \ref{cdu_dist_table} the distribution of $_c\Delta_F$ by computing the number of pairs $(c,\gamma)\in (\Fbn\setminus\F_2)\times (\Fbn\setminus\F_2)$ such that $_c\Delta_F$ equals to each value in $\{3,4,5\}$, where the column sum is $(2^n-2)^2$. We can see that we have $_c \Delta_F =4$ in most cases, and the cases $_c\Delta_F=3$ or $_c\Delta_F=5$ are relatively rare. So we characterize the cases that $_c\Delta_F=3$ or $_c\Delta_F=5$. First we give a characterization for the case  $_c\Delta_F=5$.

\begin{table}
\begin{center}
\begin{tabular}{|c|c|c|c|c|c|}
\hline \backslashbox{$_c\Delta_F$}{$n$} & 4 & 5 & 6 & 7 & 8\\
\hline 3 & 32 & 10 & 28 & 196 & 672\\
\hline 4 & 164 & 820 & 3576 & 15176 & 62880 \\
\hline 5 & 0 & 70 & 240 & 504 & 964 \\
\hline 
\end{tabular}
\caption{Distribution of $_c\Delta_F$ when $4\le n \le 8$.}\label{cdu_dist_table}

\end{center}
\vspace{-0.5cm}
\end{table}

\begin{theorem}\label{cdu5_thm}
Under the same assumption as in \eqref{car3} we have $_c\Delta_F = 5$ if and only if at least one of the following conditions is satisfied :\\
(i) $c\in\left\{\frac{\gamma^3}{(\gamma+1)^2},\frac{(\gamma+1)^2}{\gamma^3}\right\}$ and $\tr\left(\frac{1}{\gamma(\gamma+1)^2}\right)=0$ and $\gamma\not\in\F_4$ and $\gamma^4+\gamma^3+1\ne0$ and $\gamma^5+\gamma^2+1\ne0$.\\
(ii) $c\in\left\{\frac{\gamma+1}{\gamma^3+\gamma^2+\gamma},\frac{\gamma^3+\gamma^2+\gamma}{\gamma+1}\right\}$ and $\tr\left(1+\frac{1}{\gamma^3}\right)=0$ and $\gamma\not\in\F_4$ and $\gamma^5+\gamma^3+1\ne0$.\\
(iii) $c\in\left\{\frac{\gamma^{2^{n-1}+1}+\gamma^{2^{n-1}}+\gamma}{\gamma(\gamma+1)},\frac{\gamma(\gamma+1)}{\gamma^{2^{n-1}+1}+\gamma^{2^{n-1}}+\gamma}\right\}$ and $\tr\left(\frac{\gamma+1}{\gamma^2(\gamma^2+\gamma+1)}\right)=0$ and $\gamma\not\in\F_4$ and $\gamma^5\ne1$.\\
(iv) $n\equiv 0\pmod{8}$ and $c,\gamma\in\F_4 \setminus \F_2$.\\
(v) $c\in\{x\in\Fbn : \gamma^3 x^3+\gamma^2 x^2+(\gamma+1)x+1=0\}\setminus\{\gamma,\frac{1}{\gamma^{2^{n-1}}+\gamma}\}$, $\gamma^3c^2+(\gamma^2+\gamma+1)c+\gamma^2\ne0$ and $\tr\left(\frac{(c\gamma+c+1)(c^2\gamma+1)}{c(c+\gamma)^2}\right)=0$ and $\gamma\not\in\F_4$.\\
(vi) $c\in\{x\in\Fbn : x^3+(\gamma+1) x^2+\gamma^2 x+\gamma^3=0\}\setminus\{\gamma^{-1}, \gamma^{2^{n-1}}+\gamma\}$, $\gamma^2c^2+(\gamma^2+\gamma+1)c+\gamma^3\ne0$. and $\tr\left(\frac{(c+\gamma+1)(c^2+\gamma)}{(c\gamma+1)^2}\right)=0$ and $\gamma\not\in\F_4$.
\end{theorem}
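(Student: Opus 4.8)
The plan is to use the bound $_c\Delta_F\le 5$ from Theorem \ref{cdu_upper_inv} and determine exactly when it is attained. Recall $P=\{0,1,\gamma\}$ and $P_a=\{0,1,\gamma,a,a+1,a+\gamma\}$. Outside $P_a$ the equation $_cD_aF(x)=b$ reduces to $bx^2+(ab+c+1)x+ac=0$, which has at most two roots (Lemma \ref{panot2_lemma}); inside $P_a$ the six points split into the three pairs $\{0,a\}$, $\{1,a+1\}$, $\{\gamma,a+\gamma\}$, and since $F$ is a permutation the argument giving \eqref{asa_eqn} shows the two members of a pair never share the same value of $_cD_aF$, so each pair contributes at most one solution. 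Hence $_c\Delta_F(a,b)=5$ forces $\#P_a=6$ together with exactly one solution in each of the three pairs and exactly two solutions outside $P_a$. The whole question thus becomes: for which $(c,\gamma)$ does there exist $(a,b)$ realizing this configuration?

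First I would tabulate the six values $_cD_aF(x_0)$ for $x_0\in P_a$ using $F(0)=1$, $F(1)=\gamma^{-1}$, $F(\gamma)=0$ and $F(y)=y^{-1}$ otherwise (e.g.\ $_cD_aF(0)=a^{-1}+c$, $_cD_aF(\gamma)=(a+\gamma)^{-1}$, $_cD_aF(a+\gamma)=c(a+\gamma)^{-1}$). Choosing one representative from each pair gives $2^3=8$ configurations; requiring the three chosen values to coincide yields two rational equations in $a$, and after clearing denominators two quadratics in $a$ with a common constant term. Eliminating between them determines $a$ as a rational function of $c,\gamma$, and substituting back produces a single polynomial relation between $c$ and $\gamma$. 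For instance, the configuration $\{0,1,\gamma\}$ forces $a=\gamma^{-1}$ and $c=\gamma^3/(\gamma+1)^2$, which is the first value in case (i). The involution $x\mapsto x+a$ pairs the eight configurations, and by Lemma \ref{cdu_symm_lemma} it sends each relation to its $c\leftrightarrow c^{-1}$ image, explaining the $\{E,1/E\}$ shape of (i)--(iii) and the fact that the cubics in (v) and (vi) are $c\leftrightarrow c^{-1}$ duals. I would verify that the four configuration-classes produce, respectively, the rational conditions of (i) and (ii); the square-root condition (iii), which appears because that configuration's elimination yields $a^2=\gamma+1$, so $a=(\gamma+1)^{2^{n-1}}$ and $c$ is forced into a form involving $\gamma^{2^{n-1}}$ (using $\gamma^{2^{n-1}}=(\gamma+1)^{2^{n-1}}+1$); and a degree-five relation that factors as the cubic of (v) times the two linear factors $c+\gamma$ and $c+1/(\gamma^{2^{n-1}}+\gamma)$, which are the spurious roots excluded in the statement because they collapse $a$ into $\{1,\gamma,\gamma+1\}$.

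Second, for each configuration I would impose the "two solutions outside $P_a$" requirement of Lemma \ref{panot2_lemma} at the determined $a$ and $b$: the trace condition \eqref{cdu5_trace} simplifies to the stated traces (e.g.\ $\tr(1/(\gamma(\gamma+1)^2))=0$ in (i)), while the non-membership inequalities \eqref{pac2_check} together with $b\ne0$, $ab+c+1\ne0$ and the requirement $\#P_a=6$ (i.e.\ $a\notin\{1,\gamma,\gamma+1\}$) become the polynomial non-vanishing conditions quoted ($\gamma\notin\F_4$, $\gamma^4+\gamma^3+1\ne0$, $\gamma^5+\gamma^2+1\ne0$, etc.). The case $\gamma\in\F_4$ is treated separately: there $\gamma^2+\gamma+1=0$ makes several denominators vanish and distinct configurations coincide, and a direct computation in this degenerate regime yields case (iv) with the congruence $n\equiv 0\pmod 8$. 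The two directions then follow: for "if", each case supplies an explicit $(a,b)$ with five solutions, and for "only if" the eight configurations together with the $\F_4$ analysis are exhaustive, so attaining $5$ forces one of (i)--(vi).

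The main obstacle will be the bookkeeping in the "only if" direction: confirming that the non-vanishing conditions extracted from \eqref{pac2_check} are exactly necessary and sufficient (neither discarding a genuine five-solution instance nor admitting a degenerate one), correctly factoring the degree-five relation of the cubic configuration to isolate (v)/(vi) and justify the two excluded roots, and simplifying each trace expression to the clean stated form. The degenerate $\F_4$ case, where the generic elimination breaks down, is the other delicate point, and is precisely why every generic case carries the hypothesis $\gamma\notin\F_4$.
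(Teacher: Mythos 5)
Your proposal follows essentially the same route as the paper's proof: bounding solutions by the three pairs $\{0,a\}$, $\{1,a+1\}$, $\{\gamma,a+\gamma\}$ inside $P_a$ via \eqref{asa_eqn} and the quadratic outside, enumerating the $2^3$ coincidence configurations, halving them through the $x\mapsto x+a$ involution and Lemma \ref{cdu_symm_lemma} (which indeed produces the $c\leftrightarrow c^{-1}$ pairing in (i)--(iii) and the dual cubics in (v)/(vi)), solving each configuration for $(a,c)$, imposing Lemma \ref{panot2_lemma}'s trace and non-membership conditions, and isolating the degenerate $\gamma\in\F_4$ subcase as (iv). The only cosmetic deviation is your prediction of a degree-five eliminant for the (v)/(vi) configuration: the paper's substitution yields the cubic directly, with the exclusions $c\ne\gamma$ and $c\ne 1/(\gamma^{2^{n-1}}+\gamma)$ arising from $ab+c+1\ne 0$ and \eqref{pac2_check} rather than from factoring out spurious linear factors.
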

\begin{proof}
Assume that $_c\Delta_F=5$. When $a\in\{0,1,\gamma,\gamma+1\}$, we have $\#P_a=4$ and hence $_c\Delta_F(a,b)\le 4$ by \eqref{car3cdu_eqn}. Thus we assume that $a\in \Fbn \setminus \{0,1,\gamma,\gamma+1\}$, and then we have $\# P_a=6$. We can see in the proof of Theorem \ref{cdu_upper_inv} that $_cD_a(x)=b$ has at most $3$ solutions in $P_a$ and at most $2$ solutions in $\Fbn\setminus P_a$. Since $_c\Delta_F = 5$, there are $a\in\Fbn\setminus\{0,1,\gamma,\gamma+1\}$ and $b\in\Fbn$ such that $_cD_aF(x)=b$ has $3$ solutions in $P_a$ and $2$ solutions in $\Fbn\setminus P_a$. First we consider that $_cD_aF(x)=b$ has $3$ solutions in $P_a$. We state $_cD_a F(x)=F(x+a)+cF(x)$ for each $x\in P_a$.
\begin{equation*}
\begin{array}{ll}
_cD_a F(0)=F(a)+cF(0)=a^{-1}+c & _cD_a F(a)=F(0)+cF(a)=1+ca^{-1}\\
_cD_a F(1)=F(1+a)+cF(1)=(a+1)^{-1}+\gamma^{-1}c & _cD_a F(a+1)=F(1)+cF(a+1)=\gamma^{-1}+c(a+1)^{-1}\\
_cD_a F(\gamma)=F(\gamma+a)+cF(\gamma)=(a+\gamma)^{-1} & _cD_a F(a+\gamma)=F(\gamma)+cF(\gamma+a)=c(a+\gamma)^{-1}
\end{array}
\end{equation*}
Applying \eqref{asa_eqn}, we investigate required conditions such that $_cD_aF(x)=b$ has at least two solutions in $P_a$.
\begin{itemize}
\item $_c D_a F(0)={}_c D_a F(1)\ \Rightarrow\ c(\gamma+1) a^2+c(\gamma+1) a+\gamma=0$ (has no solutions $\Leftrightarrow \tr\left(\frac{\gamma}{c(\gamma+1)}\right)=1$). 
\item $_c D_a F(0)={}_c D_a F(\gamma)\ \Rightarrow\ c a^2+c\gamma a+\gamma=0$ (has no solutions $\Leftrightarrow \tr(c^{-1}\gamma^{-1})=1$). 
\item $_c D_a F(0)={}_c D_a F(a+1)\ \Rightarrow\ (c\gamma+1) a^2+(\gamma+1) a+\gamma=0$ (has no solutions $\Leftrightarrow \tr\left(\frac{\gamma(c\gamma+1)}{(\gamma+1)^2}\right)=1$ and $c\ne\gamma^{-1}$).
\item  $_c D_a F(0)={}_c D_a F(a+\gamma)\ \Rightarrow\ c a^2+(c\gamma+c+1)a+\gamma=0$ (has no solutions $\Leftrightarrow \tr\left(\frac{c\gamma}{(c\gamma+c+1)^2}\right)=1$ and $c\ne (\gamma+1)^{-1}$).
\item $_c D_a F(1)={}_c D_a F(\gamma)\ \Rightarrow\ c a^2+c(\gamma+1) a+\gamma(c+\gamma+1)=1$ (has no solutions $\Leftrightarrow \tr\left(\frac{\gamma(c+\gamma+1)}{c(\gamma+1)^2}\right)=1$ and $c\ne \gamma+1$).
\item $_c D_a F(1)={}_c D_a F(a)\ \Rightarrow\ (c+\gamma) a^2+c(\gamma+1) a+c\gamma=0$ (has no solutions $\Leftrightarrow \tr\left(\frac{\gamma(c+\gamma)}{c(\gamma+1)^2}\right)=1$ and $c\ne \gamma$).
\item $_c D_a F(1)={}_c D_a F(a+\gamma)\ \Rightarrow\ c a^2+(c+\gamma) a+\gamma^2=0$ (has no solutions $\Leftrightarrow \tr\left(\frac{c\gamma^2}{(c+\gamma)^2}\right)=1$ and $c\ne \gamma$).
\item $_c D_a F(\gamma)={}_c D_a F(a)\ \Rightarrow\ a^2+(c+\gamma+1) a+c\gamma=0$ (has no solutions $\Leftrightarrow \tr\left(\frac{c\gamma}{(c+\gamma+1)^2} \right)=1$ and $c\ne \gamma+1$).
\item $_c D_a F(\gamma)={}_c D_a F(a+1)\ \Rightarrow\  a^2+(c\gamma+1) a+c\gamma^2=0$ (has no solutions $\Leftrightarrow \tr\left(\frac{c\gamma^2}{(c\gamma+1)^2} \right)=1$ and $c\ne \gamma^{-1}$).
\item $_c D_a F(a)={}_c D_a F(a+1)\ \Rightarrow\ (\gamma+1) a^2+(\gamma+1) a+c\gamma=0$ (has no solutions $\Leftrightarrow \tr\left(\frac{c\gamma}{\gamma+1}\right)=1$).
\item $_c D_a F(a)={}_c D_a F(a+\gamma)\ \Rightarrow\ a^2+\gamma a+c\gamma=0$ (has no solutions $\Leftrightarrow \tr(c\gamma^{-1})=1$).
\item $_c D_a F(a+1)={}_c D_a F(a+\gamma)\ \Rightarrow\  a^2+(\gamma+1) a+(c\gamma+c+1)\gamma=0$ (has no solutions $\Leftrightarrow \tr\left(\frac{(c\gamma+c+1)\gamma}{(\gamma+1)^2}\right)=1$ and $c\ne (\gamma+1)^{-1}$).
\end{itemize}

Next we investigate all the cases that $_cD_aF(x)=b$ has $3$ solutions in $P_a$. For each case, we apply Lemma \ref{panot2_lemma} to find required conditions that $_cD_aF(x)=b$ has $2$ solutions in $\Fbn\setminus P_a$.

\bigskip 
\noindent (Case 1) If $_c D_a F(0)= {}_c D_a F(1) = {}_c D_a F(\gamma)$  then $a$ is a common solution of $c(\gamma+1) a^2+c(\gamma+1) a+\gamma=0$, $c a^2+c\gamma a+\gamma=0$ and $ca^2+c(\gamma+1) a+\gamma(c+\gamma+1)=0$. We add the first two equations to get $0=c\gamma a^2+ca=ca(\gamma a+1)$ and hence we have $a=\gamma^{-1}$. We have $c=\frac{\gamma^3}{(\gamma+1)^2}$ to substitute $a=\gamma^{-1}$ to each equation. We set $b= {}_cD_a(\gamma)=\frac{\gamma}{\gamma^2+1}$ and check Lemma \ref{panot2_lemma}. Then we have $b\ne0$ and 
$$ab+c+1=\frac{1}{\gamma^2+1}+\frac{\gamma^3}{\gamma^2+1}+1=\frac{\gamma^3+\gamma^2}{\gamma^2+1}=\frac{\gamma^2}{\gamma+1}\ne0,$$
since $\gamma\ne0$. And \eqref{cdu5_trace} becomes 
$$0=\tr\left(\frac{abc}{(ab+c+1)^2}\right)=\tr\left(\frac{\frac{\gamma^3}{(\gamma+1)^4}}{\frac{\gamma^4}{(\gamma+1)^2}}\right)=\tr\left(\frac{1}{\gamma(\gamma+1)^2}\right).$$
Using $b+c+1=\frac{\gamma^3+\gamma}{\gamma^2+1}+1=\gamma+1$ and $\gamma(b\gamma+c+1)=\frac{\gamma(\gamma^3+1)}{(\gamma+1)^2}=\frac{\gamma(\gamma^2+\gamma+1)}{\gamma+1}$, \eqref{pac2_check} becomes
\begin{align*}
a(b+c)+b+c+1&=\gamma^{-1}\cdot \gamma+\gamma+1=\gamma\ne 0\\
a(b+1)+b+c+1&=\gamma^{-1}\cdot \frac{\gamma^2+\gamma+1}{\gamma^2+1}+\gamma+1=\frac{(\gamma^2+\gamma+1)+(\gamma^4+\gamma^3+\gamma^2+\gamma)}{\gamma(\gamma+1)^2}=\frac{\gamma^4+\gamma^3+1}{\gamma(\gamma+1)^2}\\
(b\gamma+c)a+\gamma(b\gamma+c+1)&=\gamma^{-1}\cdot \frac{\gamma^3+\gamma^2}{(\gamma+1)^2}+\frac{\gamma(\gamma^2+\gamma+1)}{\gamma+1}=\frac{\gamma^2(\gamma+1)}{\gamma+1}=\gamma^2\ne0\\
(b\gamma+1)a+\gamma(b\gamma+c+1)&=\gamma^{-1}\cdot \frac{1}{(\gamma+1)^2}+\frac{\gamma(\gamma^2+\gamma+1)}{\gamma+1}=\frac{\gamma^2(\gamma^3+1)+1}{\gamma(\gamma+1)^2}=\frac{\gamma^5+\gamma^2+1}{\gamma(\gamma+1)^2}
\end{align*}
Hence we require $\gamma^4+\gamma^3+1\ne0$ and $\gamma^5+\gamma^2+1\ne0$. It is easy to check $a=\gamma^{-1}\not\in\{0,1,\gamma,\gamma+1\}$ requires $\gamma\not\in\F_4$.\\
If $c=\frac{(\gamma+1)^2}{\gamma^3}$, then we have $_c\Delta_F= {}_{c^{-1}}\Delta_F=5$ by Lemma \ref{cdu_symm_lemma}. We exchange $c$ by $c^{-1}$ in the above analysis, we also require $\tr\left(\frac{1}{\gamma(\gamma+1)^2}\right)=0$, $\gamma\not\in\F_4$, $\gamma^4+\gamma^3+1\ne0$ and $\gamma^5+\gamma^2+1\ne0$.

\bigskip
\noindent (Case 2) If $_c D_a F(0)= {}_c D_a F(1) = {}_c D_a F(a+\gamma)$  then $a$ is a common solution of $c(\gamma+1) a^2+c(\gamma+1) a+\gamma=0$, $c a^2+(c\gamma+c+1)a+\gamma=0$ and $c a^2+(c+\gamma) a+\gamma^2=0$. We add the first two equation to have $c\gamma a^2+a=0$ and hence $ca=\gamma^{-1}$. Hence the first equation implies $0=ca(\gamma+1)a +ca(\gamma+1) +\gamma=\frac{(\gamma+1)a+(\gamma^2+\gamma+1)}{\gamma}$ and hence we have $a=\frac{\gamma^2+\gamma+1}{\gamma+1}$ and $c=\frac{1}{a\gamma}=\frac{\gamma+1}{\gamma^3+\gamma^2+\gamma}$. Note that we have $a=0$ if $\gamma\in\F_4\setminus \F_2$ and hence we require $\gamma\not\in\F_4$.
We set $b= {} _c D_a F(0)=a^{-1}+c=\frac{\gamma^2+1}{\gamma^3+\gamma^2+\gamma}$ and check Lemma \ref{panot2_lemma}. Then we have $b\ne0$ and 
\begin{align*}
 ab+c+1&=\frac{\gamma^2+\gamma+1}{\gamma+1}\cdot \frac{\gamma^2+1}{\gamma^3+\gamma^2+\gamma}+\frac{\gamma+1}{\gamma^3+\gamma^2+\gamma}+1= \frac{(\gamma^3+1)+(\gamma+1)+(\gamma^3+\gamma^2+\gamma)}{\gamma^3+\gamma^2+\gamma}\\
&=\frac{\gamma^2}{\gamma^3+\gamma^2+\gamma}\ne0
\end{align*} 
 since $\gamma\ne0,1$. And \eqref{cdu5_trace} becomes
$$0=\tr\left(\frac{abc}{(ab+c+1)^2}\right)=\tr\left(\frac{\frac{(\gamma+1)^2}{\gamma^2(\gamma^2+\gamma+1)}}{\frac{\gamma^2}{(\gamma^2+\gamma+1)^2}}\right)=\tr\left(\frac{\gamma^4+\gamma^3+\gamma+1}{\gamma^4}\right)=\tr\left(1+\frac{1}{\gamma^3}\right).$$
Using $b+c+1=\frac{\gamma+1}{\gamma^2+\gamma+1}+1=\frac{\gamma^2}{\gamma^2+\gamma+1}$ and $\gamma(b\gamma+c+1)=\gamma\left(\frac{\gamma^3+1}{\gamma(\gamma^2+\gamma+1)}+1\right)=\gamma\left(\frac{\gamma+1}{\gamma}+1\right)=1$, \eqref{pac2_check} becomes
\begin{align*}
a(b+c)+b+c+1&=\frac{\gamma^2+\gamma+1}{\gamma+1}\cdot\frac{\gamma+1}{\gamma^2+\gamma+1}+\frac{\gamma+1}{\gamma^2+\gamma+1}+1=\frac{\gamma+1}{\gamma^2+\gamma+1}\ne0\\
a(b+1)+b+c+1&=\frac{\gamma^2+\gamma+1}{\gamma+1}\cdot \frac{\gamma^3+\gamma+1}{\gamma(\gamma^2+\gamma+1)}+\frac{\gamma^2}{\gamma^2+\gamma+1}=\frac{\gamma^5+\gamma^3+1}{\gamma(\gamma^3+1)}\\
(b\gamma+c)a+\gamma(b\gamma+c+1)&=\frac{\gamma^2+\gamma+1}{\gamma+1}\cdot \frac{\gamma+1}{\gamma}+1=\frac{\gamma^2+1}{\gamma}\ne0\\
(b\gamma+1)a+\gamma(b\gamma+c+1)&=\frac{\gamma^2+\gamma+1}{\gamma+1}\cdot \frac{\gamma}{\gamma^2+\gamma+1}+1=\frac{1}{\gamma+1}\ne0
\end{align*}
Hence we require $\gamma^5+\gamma^3+1\ne0$. It is easy to check $a=\frac{\gamma^2+\gamma+1}{\gamma+1}\not\in\{0,1,\gamma,\gamma+1\}$ if $\gamma\not\in\F_4$. \\
If $c=\frac{\gamma^3+\gamma^2+\gamma}{\gamma+1}$, then we have $_c\Delta_F= {}_{c^{-1}}\Delta_F=5$ by Lemma \ref{cdu_symm_lemma}. We exchange $c$ by $c^{-1}$ in the above analysis, we also require $\tr\left(1+\frac{1}{\gamma^3}\right)=0$ and $\gamma\not\in\F_4$ and $\gamma^5+\gamma^3+1\ne0$.

\bigskip 
\noindent (Case 3) If $_c D_a F(0)= {}_c D_a F(a+1) = {}_c D_a F(a+\gamma)$ then $a$ is a common solution of $(c\gamma+1) a^2+(\gamma+1) a+\gamma=0$, $ c a^2+(c\gamma+c+1)a+\gamma=0$ and $a^2+(\gamma+1) a+(c\gamma+c+1)\gamma=0$. We add the first equation and the third equation to get $c\gamma a^2 +c\gamma(\gamma+1)=0$ and hence $a^2=\gamma+1$ so $a=\gamma^{2^{n-1}}+1$. We substitute $a=\gamma^{2^{n-1}}+1$ to the third equation to have $0=c\gamma(\gamma+1)+\gamma^{2^{n-1}+1}+\gamma^{2^{n-1}}+\gamma$ and hence $c=\frac{\gamma^{2^{n-1}+1}+\gamma^{2^{n-1}}+\gamma}{\gamma(\gamma+1)}\ne 0$ because $0\ne \gamma(\gamma^2+\gamma+1)=(\gamma^{2^{n-1}+1}+\gamma^{2^{n-1}}+\gamma)^2$ when $\gamma\not\in\F_4$. We set $b= {}_c D_a F(a+\gamma)=\frac{\gamma^{2^{n-1}}}{\gamma(\gamma+1)}$ and check Lemma \ref{panot2_lemma}. Then we have $b\ne0$ and
$$ab+c+1=\frac{\gamma^{2^{n-1}}(\gamma^{2^{n-1}}+1)+(\gamma^{2^{n-1}+1}+\gamma^{2^{n-1}}+\gamma)+\gamma^2+\gamma}{\gamma(\gamma+1)}=\frac{\gamma^{2^{n-1}+1}+\gamma^2+\gamma}{\gamma(\gamma+1)}\ne0$$
requires $\gamma\not\in\F_4\setminus \F_2$ because $0\ne \gamma(\gamma^2+\gamma+1)=(\gamma^{2^{n-1}+1}+\gamma^{2^{n-1}}+\gamma)^2$. And \eqref{cdu5_trace} becomes
\begin{align*}
0&=\tr\left(\frac{abc}{(ab+c+1)^2}\right)=\tr\left(\frac{\frac{\gamma^{2^{n-1}}(\gamma^{2^{n-1}}+1)(\gamma^{2^{n-1}+1}+\gamma^{2^{n-1}}+\gamma)}{\gamma^2(\gamma+1)^2}}{\frac{\gamma^4+\gamma^3+\gamma^2}{\gamma^2(\gamma+1)^2}}\right)\\
&=\tr\left(\frac{\gamma^{2^{n-1}}(\gamma^{2^{n-1}}+1)(\gamma^{2^{n-1}+1}+\gamma^{2^{n-1}}+\gamma)}{\gamma^2(\gamma^2+\gamma+1)}\right)=\tr\left(\frac{\gamma(\gamma+1)(\gamma^{3}+\gamma^{2}+\gamma)}{\gamma^4(\gamma^2+\gamma+1)^2}\right)\\
&=\tr\left(\frac{\gamma+1}{\gamma^2(\gamma^2+\gamma+1)}\right).
\end{align*}
Using $b+c+1=\frac{\gamma^{2^{n-1}}+1}{\gamma+1}+1=\frac{\gamma^{2^{n-1}}+\gamma}{\gamma+1}$ and $\gamma(b\gamma+c+1)=\gamma\left(\frac{\gamma^{2^{n-1}}+\gamma}{\gamma(\gamma+1)}+1\right)=\frac{\gamma^{2^{n-1}+1}+\gamma^3}{\gamma(\gamma+1)}$, \eqref{pac2_check} becomes
\begin{align*}
a(b+c)+b+c+1&=(\gamma^{2^{n-1}}+1)\cdot\frac{\gamma^{2^{n-1}}+1}{\gamma+1}+\frac{\gamma^{2^{n-1}}+\gamma}{\gamma+1}=\frac{\gamma^{2^{n-1}}+1}{\gamma+1}=(\gamma+1)^{2^{n-1}-1}\ne0 \\
a(b+1)+b+c+1&=(\gamma^{2^{n-1}}+1)\cdot \frac{\gamma^{2^{n-1}}+\gamma^2+\gamma}{\gamma(\gamma+1)}+\frac{\gamma^{2^{n-1}}+\gamma}{\gamma+1}=\gamma^{2^{n-1}-1}(\gamma+1)\ne0\\
(b\gamma+c)a+\gamma(b\gamma+c+1)&=(\gamma^{2^{n-1}}+1)\cdot \frac{\gamma^{2^{n-1}}+\gamma}{\gamma(\gamma+1)}+\frac{\gamma^{2^{n-1}+1}+\gamma^3}{\gamma(\gamma+1)}=\frac{\gamma^{2^{n-1}}+\gamma^3}{\gamma(\gamma+1)}=\frac{(\gamma^6+\gamma)^{2^{n-1}}}{\gamma(\gamma+1)}\\
(b\gamma+1)a+\gamma(b\gamma+c+1)&=(\gamma^{2^{n-1}}+1)\cdot \frac{\gamma^{2^{n-1}+1}+\gamma^2+\gamma}{\gamma(\gamma+1)}+\frac{\gamma^{2^{n-1}+1}+\gamma^3}{\gamma(\gamma+1)}=\gamma^{2^{n-1}}+\gamma+1\\
&=(\gamma^2+\gamma+1)^{2^{n-1}}\ne0
\end{align*}
Hence we require $\gamma^5\ne1$. It is easy to check $a=\gamma^{2^{n-1}}+1\not\in\{0,1,\gamma,\gamma+1\}$ if $\gamma\not\in\F_4$.  \\
If $c=\frac{\gamma(\gamma+1)}{\gamma^{2^{n-1}+1}+\gamma^{2^{n-1}}+\gamma}$, then we have $_c\Delta_F={}_{c^{-1}}\Delta_F=5$ by Lemma \ref{cdu_symm_lemma}. We exchange $c$ by $c^{-1}$ in the above analysis, we also require $\tr\left(\frac{\gamma+1}{\gamma^2(\gamma^2+\gamma+1)}\right)=0$, $\gamma\not\in\F_4$ and $\gamma^5\ne1$.

\bigskip
\noindent (Case 4) If $_c D_a F(0)= {}_c D_a F(\gamma) = {}_c D_a F(a+1)$ then $a$ is a common solution of $c a^2+c\gamma a+\gamma=0$, $(c\gamma+1) a^2+(\gamma+1) a+\gamma=0$ and $a^2+(c\gamma+1) a+c\gamma^2=0$. We add the first two equations to obtain $(c\gamma+c+1)a^2+(c\gamma+\gamma+1)a=0$. Hence we have $c\gamma+c+1=c\gamma+\gamma+1=0$ or $a=\frac{c\gamma+\gamma+1}{c\gamma+c+1}$. 
\begin{itemize}
\item If $c\gamma+c+1=c\gamma+\gamma+1=0$ then $c=\gamma$ and $\gamma^2+\gamma+1=0$ so $\gamma\in\F_4$. And $a$ is a solution of $a^2+\gamma a +\gamma=0$ where such $a$ exists if and only if $\tr(\gamma)=0$ if and only if $n\equiv 0\pmod{4}$. Since $a^2=a\gamma+1$, we have
\begin{align*}
a^4&=a^2 \gamma^2+1=(a\gamma+1)\gamma^2+1=a+\gamma,\\
a^8&=a^2+\gamma^2=a\gamma+1+\gamma^2=a\gamma+\gamma.
\end{align*}
Thus $a+a^2+a^4+a^8=a+(a\gamma+1)+(a+\gamma)+(a\gamma+\gamma)=1$ and hence
\begin{equation*}
\tr(a)=\sum_{i=0}^{n-1} a^{2^i}=\sum_{j=0}^{\frac{n}{4}}(a+a^2+a^{2^2}+a^{2^3})^{2^j}=\sum_{j=0}^{\frac{n}{4}}1=\frac{n}{4}
\end{equation*}
 We set $b= {}_c D_a F(0)= a^{-1}+\gamma$ and check Lemma \ref{panot2_lemma}. Then we have $b\ne 0$, since if $a=\gamma^{-1}=\gamma^2$ then $0=a^2+\gamma a +1=\gamma$ which is a contradiction. We also have $ab+c+1=a(a^{-1}+\gamma)+\gamma+1=\gamma(a+1)\ne0$, since if $a=1$ then $0=a^2+\gamma a +1=\gamma$ which is a contradiction. And \eqref{cdu5_trace} becomes
\begin{align*}
0&=\tr\left(\frac{abc}{(ab+c+1)^2}\right)=\tr\left(\frac{a(a^{-1}+\gamma)\gamma}{\gamma^2(a+1)^2}\right)=\tr\left(\frac{(a\gamma+1)\gamma}{a\gamma^3}\right)=\tr\left(\frac{a^2\gamma}{a}\right)\\
&=\tr(a\gamma)=\tr(a^2+1)=\tr(a)
\end{align*}
if and only if $n\equiv 0\pmod{8}$. We can easily check that $a\not\in\{0,1,\gamma,\gamma+1\}=\F_4$ if $\gamma\in\F_4\setminus\F_2$ and $a^2+a\gamma+1=0$. 
Using $b+c+1=a^{-1}+1$ and $\gamma(b\gamma+c+1)=a^{-1}\gamma^2+\gamma^3+\gamma^2+\gamma=a^{-1}\gamma^2$, \eqref{pac2_check} becomes
\begin{align*}
a(b+c)+b+c+1&=a\cdot a^{-1}+a^{-1}+1=a^{-1}\ne0\\
a(b+1)+b+c+1&=a(a^{-1}+\gamma+1) +a^{-1}+1=a\gamma+a+a^{-1}=a^{-1}\gamma^2(a^2+\gamma)\ne0\\
(b\gamma+c)a+\gamma(b\gamma+c+1)&=a(a^{-1}\gamma+1)+a^{-1}\gamma^2=a^{-1}(a^2+\gamma a+\gamma^2)=a^{-1}\gamma \ne0\\
(b\gamma+1)a+\gamma(b\gamma+c+1)&=a\gamma(a^{-1}+1)+a^{-1}\gamma^2=a^{-1}\gamma(a^2+a+\gamma)=a^{-1}( a+1)\ne0
\end{align*}
Note that $_{\gamma^2}\Delta_F= {}_{\gamma^{-1}}\Delta_F= {}_{\gamma}\Delta_F=5$ in this case.
\item Now we assume that $\gamma\in\Fbn\setminus\F_4$ and $a=\frac{c\gamma+\gamma+1}{c\gamma+c+1}$. We substitute $a=\frac{c\gamma+\gamma+1}{c\gamma+c+1}$ to the first equation to get 
\begin{align*}
0&=c(c\gamma+\gamma+1)^2+c\gamma(c\gamma+c+1)(c\gamma+\gamma+1)+\gamma(c\gamma+c+1)^2\\
&=\gamma^3 c^3 +\gamma^2 c^2+(\gamma+1)c+\gamma
\end{align*}
We have the same results when we substitute $a=\frac{c\gamma+\gamma+1}{c\gamma+c+1}$ to the last two equations. We set $b=\  _c D_a F(0)=\frac{c^2\gamma+1}{c\gamma+\gamma+1}$ check Lemma \ref{panot2_lemma}. If $c^2\gamma=1$ then $=\gamma^3 c^3 +\gamma^2 c^2+(\gamma+1)c+\gamma=(\gamma^2+\gamma+1)c$, which is a contradiction to $c\ne0$ and $\gamma\not\in\F_4$. Hence we have $b\ne 0$. Moreover,
$$ab+c+1=\frac{c^2\gamma+1+(c+1)(c\gamma+c+1)}{c\gamma+c+1}=\frac{c(c+\gamma)}{c\gamma+c+1}\ne0$$
if and only if $c\ne \gamma$. And \eqref{cdu5_trace} becomes
\begin{align*}
0&=\tr\left(\frac{abc}{(ab+c+1)^2}\right)=\tr\left(\frac{\frac{c(c^2\gamma+1)}{c\gamma+c+1}}{\frac{c^2(c+\gamma)^2}{(c\gamma+c+1)^2}}\right)=\tr\left(\frac{(c\gamma+c+1)(c^2\gamma+1)}{c(c+\gamma)^2}\right).
\end{align*}
Using $b+c=\frac{c\gamma+c+1}{c\gamma+\gamma+1}=a^{-1}$ and hence $b+c+1=\frac{c+\gamma}{c\gamma+\gamma+1}$ and $\gamma(b\gamma+c+1)=\frac{c^2\gamma^2(\gamma+1)+c\gamma+\gamma}{c\gamma+\gamma+1}$, \eqref{pac2_check} becomes
\begin{align*}
a(b+c)+b+c+1&=a\cdot a^{-1}+a^{-1}+1=a^{-1}\ne0\\
a(b+1)+b+c+1&=\frac{c\gamma+\gamma+1}{c\gamma+c+1}\cdot \frac{\gamma(c^2+c+1)}{c\gamma+\gamma+1}+\frac{c+\gamma}{c\gamma+\gamma+1}=\frac{\gamma^2c^3+c^2+(\gamma+1)^2c+\gamma^2}{(c\gamma+c+1)(c\gamma+\gamma+1)}\\
&=\frac{\gamma^3c^3+\gamma c^2+\gamma(\gamma+1)^2c+\gamma^3}{\gamma(c\gamma+c+1)(c\gamma+\gamma+1)}=\frac{(\gamma^2+\gamma) c^2+(\gamma^3+1)c+\gamma^3+\gamma}{\gamma(c\gamma+c+1)(c\gamma+\gamma+1)}\\
&=\frac{(\gamma+1)( c+\gamma)(\gamma c+\gamma+1)}{\gamma(c\gamma+c+1)(c\gamma+\gamma+1)}=\frac{(\gamma+1)( c+\gamma)}{\gamma(c\gamma+c+1)}\ne 0
\\
(b\gamma+c)a+\gamma(b\gamma+c+1)&=\frac{c\gamma+\gamma+1}{c\gamma+c+1}\cdot \frac{c^2\gamma(\gamma+1)+c(\gamma+1)+\gamma}{c\gamma+\gamma+1}+\frac{c^2\gamma^2(\gamma+1)+c\gamma+\gamma}{c\gamma+\gamma+1}\\
&=\frac{(\gamma+1)\gamma^3c^3+(\gamma+1)\gamma c^2+(\gamma+1)^2 c+\gamma^2}{(c\gamma+c+1)(c\gamma+\gamma+1)}\\
&=\frac{(\gamma+1)(\gamma^3c^3+\gamma^2 c^2+(\gamma+1) c)+\gamma^2(\gamma+1)c^2+\gamma^2}{(c\gamma+c+1)(c\gamma+\gamma+1)}\\
&=\frac{\gamma^2(\gamma+1)c^2+\gamma}{(c\gamma+c+1)(c\gamma+\gamma+1)}=\frac{\gamma(\gamma(\gamma+1)c^2+1)}{(c\gamma+c+1)(c\gamma+\gamma+1)}
\end{align*}
\begin{align*}
(b\gamma+1)a+\gamma(b\gamma+c+1)&=\frac{c\gamma+\gamma+1}{c\gamma+c+1}\cdot \frac{c^2\gamma^2+c\gamma+1}{c\gamma+\gamma+1}+\frac{c^2\gamma^2(\gamma+1)+c\gamma+\gamma}{c\gamma+\gamma+1}\\
&=\frac{(\gamma^2+\gamma+1)\gamma^2c^3+\gamma c^2+1}{(c\gamma+c+1)(c\gamma+\gamma+1)}=\frac{(\gamma^2+\gamma+1)\gamma^3c^3+\gamma^2 c^2+\gamma}{\gamma(c\gamma+c+1)(c\gamma+\gamma+1)}\\
&=\frac{(\gamma^2+\gamma+1)(\gamma^2c^2+(\gamma+1)c+\gamma)+\gamma^2 c^2+\gamma}{\gamma(c\gamma+c+1)(c\gamma+\gamma+1)}\\
&=\frac{(\gamma+1)\gamma^3c^2+(\gamma+1)(\gamma^2+\gamma+1)c+\gamma^2(\gamma+1)}{\gamma(c\gamma+c+1)(c\gamma+\gamma+1)}\\
&=\frac{(\gamma+1)(\gamma^3c^2+(\gamma^2+\gamma+1)c+\gamma^2)}{\gamma(c\gamma+c+1)(c\gamma+\gamma+1)}
\end{align*}
Hence we require $c\ne \frac{1}{\gamma^{2^{n-1}}+\gamma}$ and $\gamma^3c^2+(\gamma^2+\gamma+1)c+\gamma^2\ne0$. It remains to check $a=\frac{c\gamma+\gamma+1}{c\gamma+c+1}\not\in\{0,1,\gamma,\gamma+1\}$. If $a=0$ then we have $c\gamma=\gamma+1$ or $c=\frac{\gamma+1}{\gamma}$ and hence we have $0=(c\gamma)^3+(c\gamma)^2+(\gamma+1)c+\gamma=\frac{(\gamma^2+\gamma+1)^2}{\gamma}$ which is a contradiction to $\gamma\not\in\F_4$. If $a=1$ then we have $c=\gamma$ and then we have $ab+c+1\ne0$. If $a=\gamma$ then we have $0=(c\gamma+\gamma+1)+\gamma(c\gamma+c+1)=c\gamma^2+1$, but we already see that $c\gamma^2+1\ne0$. If $a=\gamma+1$ then we have  $0=(c\gamma+\gamma+1)+(\gamma+1)(c\gamma+c+1)=c(\gamma^2+\gamma+1)$ which is a contradiction to $\gamma\not\in\F_4$.
\item We exchange $c$ by $c^{-1}$ in the above analysis to have $a=\frac{c^{-1}\gamma+\gamma+1}{c^{-1}\gamma+c^{-1}+1}=\frac{c\gamma+c+\gamma}{c+\gamma+1}$ and $b=\frac{(c^{-1})^2\gamma+1}{c^{-1}\gamma+\gamma+1}=\frac{c^2+\gamma}{c^2(\gamma+1)+c\gamma}$. Then $ _{c^{-1}} D_aF(x)=b$ has three solutions in $P_a$ if $x=c^{-1}$ is a solution of $\gamma^3x^3+\gamma^2x^2+(\gamma+1)x+1=0$ and hence $c^3+(\gamma+1)c^2+\gamma^2 c+\gamma^3=0$. Similarly, by Lemma \ref{panot2_lemma}, $ _{c^{-1}} D_aF(x)=b$ has two solutions in $\Fbn\setminus P_a$ if and only if 
\begin{align*}
0=\tr\left(\frac{(c^{-1}\gamma+c^{-1}+1)((c^{-1})^2\gamma+1)}{c^{-1}(c^{-1}+\gamma)^2}\right)=\tr\left(\frac{(\gamma+c+1)(c^2+\gamma)}{(c\gamma+1)^2}\right).
\end{align*}
and  $c\ne \gamma^{2^{n-1}}+\gamma$ and $\gamma^2c^2+(\gamma^2+\gamma+1)c+\gamma^3\ne0$.
\end{itemize}

By Lemma \ref{cdu_symm_lemma}, $b={}_c D_a F(u_1)={}_c D_a F(u_2) ={}_c D_a F(u_3)$ if and only if $bc^{-1}={}_{c^{-1}} D_a F(u_1+a)={}_{c^{-1}} D_a F(u_2+a) ={}_{c^{-1}} D_a F(u_3+a)$ for all $u_1, u_2, u_3\in \Fbn$, and hence it is enough to consider the above cases. Therefore, if conditions in this theorem are not satisfied, then $_cD_a=b$ has at most $2$ solutions in $P_a$ or at most $1$ solution in $\Fbn\setminus P_a$ and hence we have $_c\Delta_F\le 4$.

Conversely, if each condition in this theorem holds, then we set $a$ and $b$ the same as in the above analysis in each case. By the above analysis in each case, we can see that $_cD_aF(x)=b$ has $3$ solutions in $P_a$. We can also see that $_cD_aF(x)=b$ has $2$ solutions in $\Fbn\setminus P_a$ by Lemma \ref{panot2_lemma}. Therefore we have $_c\Delta_F(a,b)=5$ and hence $_c\Delta_F=5$ by Theorem \ref{cdu_bound_thm}, which completes the proof.
\end{proof}

By Theorem \ref{cdu5_thm}, we can say that $3\le {}_c\Delta_F \le 4$ if and only if all the conditions in Theorem \ref{cdu5_thm} do not hold. Next we give a simple characterization for the case $_c\Delta_F=3$.

\begin{corollary}\label{cdu3_coro} We have $_c\Delta_F=3$ if $c\not\in\{\gamma,\gamma^{-1},\gamma+1,(\gamma+1)^{-1}\}$ and $\tr\left(\frac{\gamma}{c(\gamma+1)}\right)=
\tr(c^{-1}\gamma^{-1})=
\tr\left(\frac{\gamma(c\gamma+1)}{(\gamma+1)^2}\right)=
\tr\left(\frac{c\gamma}{(c\gamma+c+1)^2}\right)=
\tr\left(\frac{\gamma(c+\gamma+1)}{c(\gamma+1)^2}\right)=
\tr\left(\frac{\gamma(c+\gamma)}{c(\gamma+1)^2}\right)=
\tr\left(\frac{c\gamma^2}{(c+\gamma)^2}\right)=
\tr\left(\frac{c\gamma}{(c+\gamma+1)^2} \right)=
\tr\left(\frac{c\gamma^2}{(c\gamma+1)^2} \right)=
\tr\left(\frac{c\gamma}{\gamma+1}\right)=
\tr(c\gamma^{-1})=
\tr\left(\frac{(c\gamma+c+1)\gamma}{(\gamma+1)^2}\right)
=1$.
\end{corollary}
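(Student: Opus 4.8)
The plan is to lean on Theorem~\ref{cdu_bound_thm}, which already supplies $_c\Delta_F \ge 3$, so that it suffices to prove the reverse inequality $_c\Delta_F \le 3$: for every $a \in \Fbnmul$ and $b \in \Fbn$ the equation $_cD_aF(x)=b$ has at most three solutions. I would first dispose of the contribution from outside $P_a$ exactly as in the proof of Theorem~\ref{cdu_upper_inv}: for $x \in \Fbn \setminus P_a$ one has $_cD_aF(x) = {}_cD_a Inv(x)$, so such solutions satisfy $bx^2+(ab+c+1)x+ca=0$ and number at most two. Thus everything reduces to counting how many of the values $\{{}_cD_aF(u) : u \in P_a\}$ can coincide, and it is enough to show that this coincidence count is at most one (giving total $\le 1+2=3$), or, in the degenerate cases, that a coincidence forces the complement quadratic to lose a root.

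The key observation is that the twelve trace conditions in the statement are exactly the negations of the twelve collision conditions recorded in the bullet list of the proof of Theorem~\ref{cdu5_thm}. Each bullet shows that $_cD_aF(u)={}_cD_aF(v)$, for a fixed unordered pair $\{u,v\}$ from the six potential elements of $P_a$ that is not of the aligned form $\{x,x+a\}$, forces $a$ to satisfy a quadratic, and that this quadratic has no root precisely when the associated trace equals $1$ together with a non-degeneracy restriction on $c$. The hypothesis $c \notin \{\gamma,\gamma^{-1},\gamma+1,(\gamma+1)^{-1}\}$ supplies exactly those restrictions, keeping each of the twelve quadratics genuinely quadratic. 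Consequently, for every \emph{generic} $a$, i.e. $a \notin \{0,1,\gamma,\gamma+1\}$ so that $\#P_a=6$, no unaligned pair collides, while aligned pairs cannot collide by \eqref{asa_eqn}; hence all six values are distinct, $_cD_aF(x)=b$ has at most one solution in $P_a$, and at most three in total.

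It remains to treat the degenerate values $a \in \{1,\gamma,\gamma+1\}$ (the case $a=0$ gives a single solution since $F$ is a permutation and $c\ne 1$), and this is the step I expect to be the main obstacle. For these $a$ the set $P_a$ collapses to the four-element set $\{0,1,\gamma,\gamma+1\}$, and the generic formulas for $_cD_aF(u)$ used in the bullet list break down, because one of $a,a+1,a+\gamma$ now lands in $\{0,1,\gamma\}$ where $F$ takes its modified values. I would therefore recompute the four values $_cD_aF(u)$ directly for each such $a$, identify the (at most one) remaining unaligned pair that could collide, show that a collision pins $c$ to an explicit rational function of $\gamma$, substitute that value into the complement quadratic $bx^2+(ab+c+1)x+ca=0$, and verify that its discriminant trace is forced to equal $1$ by the trace hypotheses, so the complement contributes at most one solution and the total is again at most three.

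The delicate point is that the trace hypotheses were tailored to generic $a$, so their usefulness for the degenerate $a$ must be extracted through algebraic identities. For example, when $a=\gamma$ a second root in $P_\gamma$ occurs only for $c=\tfrac{\gamma}{(\gamma+1)^2}$, for which one computes $\tr(c)=\tr\!\big(\tfrac{1}{\gamma+1}\big)+\tr\!\big(\tfrac{1}{(\gamma+1)^2}\big)=0$, so the complement discriminant trace equals $\tr(\gamma)+\tr(c)=\tr(\gamma)$; the collision bullet for the pair $\{0,a+\gamma\}$, namely $\tr\big(c\gamma/(c\gamma+c+1)^2\big)=1$, reduces under this $c$ to $\tr(\gamma)=1$, whence the complement has no solution with this $b$ and the total stays at most three. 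Carrying out the analogous bookkeeping for $a=1$ and $a=\gamma+1$ finishes the argument, and the entire difficulty of the corollary is concentrated in verifying that these degenerate configurations can never simultaneously produce two collisions inside $P_a$ and two solutions outside it.
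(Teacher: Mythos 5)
Your proposal is correct, and for generic $a$ it is precisely the paper's argument: the lower bound comes from Theorem \ref{cdu_bound_thm}, at most two solutions of ${}_cD_aF(x)=b$ lie outside $P_a$ because they satisfy $bx^2+(ab+c+1)x+ca=0$, and the twelve trace hypotheses together with $c\notin\{\gamma,\gamma^{-1},\gamma+1,(\gamma+1)^{-1}\}$ are exactly the negations of the twelve collision conditions in the bullet list of Theorem \ref{cdu5_thm}, so ${}_cD_aF$ is injective on $P_a$ and ${}_c\Delta_F(a,b)\le 1+2=3$. Where you genuinely depart from the paper is in the degenerate shifts $a\in\{1,\gamma,\gamma+1\}$, and here your instinct is sound: it is the \emph{paper's} proof that has the hole. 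The published proof asserts injectivity on $P_a$ for every $a\in\Fbn$ by citing the proof of Theorem \ref{cdu5_thm}, but that bullet list is derived under the standing assumption $a\notin\{0,1,\gamma,\gamma+1\}$ (it uses $F(a)=a^{-1}$, $F(a+1)=(a+1)^{-1}$, $F(a+\gamma)=(a+\gamma)^{-1}$, which fail when $a+u$ lands on a modified point). For degenerate $a$ one has $\#P_a=4$, so \eqref{car3cdu_eqn} alone yields only ${}_c\Delta_F(a,b)\le 4$, and Lemma \ref{af4_lemma} shows such shifts really can produce four solutions; the corollary is therefore not established until these $a$ are examined, exactly as you propose. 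Your sample verification is correct: for $a=\gamma$ and $c=\gamma/(\gamma+1)^2$ the collision value is $b=c$, the outside quadratic has discriminant trace $\tr\bigl(\gamma^3/(\gamma+1)^2\bigr)=\tr(\gamma)+\tr(c)=\tr(\gamma)$, and the hypothesis $\tr\bigl(c\gamma/(c\gamma+c+1)^2\bigr)=1$ reduces under this $c$ to $\tr(\gamma)=1$, killing the outside solutions. In short, the paper buys brevity at the cost of completeness; your route is the same decomposition carried out to an actual proof.

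One correction to your bookkeeping: $c=\gamma/(\gamma+1)^2$ is \emph{not} the only value producing a second root in $P_\gamma$. The four unaligned pairs for $a=\gamma$ give the collision values $\frac{\gamma}{(\gamma+1)^2}$, $\frac{(\gamma+1)^2}{\gamma}$, $\frac{\gamma+1}{\gamma^2}$, $\frac{\gamma^2}{\gamma+1}$ (two cases up to the $c\leftrightarrow c^{-1}$ symmetry of Lemma \ref{cdu_symm_lemma}), and $a=1$, $a=\gamma+1$ contribute analogous lists, e.g.\ $c\in\bigl\{\frac{1}{\gamma(\gamma+1)},\gamma(\gamma+1),\frac{\gamma+1}{\gamma^2},\frac{\gamma^2}{\gamma+1}\bigr\}$ for $a=1$. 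So the degenerate-case analysis is several times longer than your single example suggests, although each instance yields to the same substitution argument you describe.
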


\begin{proof}
We can see in the proof of Theorem \ref{cdu5_thm} that $_c D_a F(u_1)\ne {}_c D_a F(u_2)$ for all $u_1, u_2 \in P_a$ with $u_1 \ne u_2$ if all the above trace conditions hold. Thus $_c D_a F(x)=b$ has at most one solution in $P_a$ for all $a\in \Fbn$ and $b\in\Fbn$. As observed in the proof of Theorem \ref{cdu5_thm} $_c D_a F(x)=b$ has at most two solutions in $\Fbn\setminus P_a$ for all $a\in \Fbn$ and $b\in\Fbn$. Hence we have $_c\Delta_F(a,b) \le 3$ for all $a\in \Fbn$ and $b\in\Fbn$. By Theorem \ref{cdu_bound_thm}, we complete the proof.
\end{proof}

We already investigated the number of pairs $(c,\gamma)\in (\Fbn\setminus\F_2)\times (\Fbn\setminus\F_2)$ that $_c\Delta_F=3$ in Table \ref{cdu_dist_table}. In Table \ref{cdu3_table} we investigate the number of pairs $(c,\gamma)$ with $_c\Delta_F=3$ can be obtained from Corollary \ref{cdu3_coro}. Unfortunately, the number of pairs $(c,\gamma)$ that can be obtained by Corollary \ref{cdu3_coro} is only a fraction of all the pairs $(c,\gamma)$ with $_c\Delta_F=3$. However, we need to investigate all the cases that $_c D_a F(u_1)={}_c D_a F(u_2)$ where $u_1, u_2\in P_a$ with $u_1\ne u_2$ to characterize all the pairs with $_c\Delta_F=3$, which requires very routine computations.

\begin{table}
\begin{center}
\begin{tabular}{|c|c|c|c|c|c|}
\hline $n$ & 4 & 5 & 6 & 7 & 8\\
\hline \# of $(c,\gamma)$ with $_c\Delta_F=3$ & 32 & 10 & 28 & 196 & 672\\
\hline \# of $(c,\gamma)$ with $_c\Delta_F=3$ satisfying Corollary \ref{cdu3_coro} & 0 & 0 & 12 & 14 & 64 \\
\hline 
\end{tabular}
\caption{Distribution of $_c\Delta_F$ when $4\le n \le 8$.}\label{cdu3_table}
\end{center}
\end{table}

Next we characterize $c$-differential uniformity of $F$ in a special case that $\gamma\in \F_4\setminus \F_2$.

\begin{lemma}\label{af4_lemma} Let $\gamma\in \F_4\setminus \F_2$, $a\in\F_4$ and $b\in\Fbn$. Then, $_c \Delta_F(a,b)=4$ if and only if $a=1$, $b=\gamma$, $c\ne\gamma^2$ and $\tr\left(\frac{c\gamma}{c^2+\gamma}\right)=0$. Otherwise, $_c \Delta_F(a,b)\le 3$.
\end{lemma}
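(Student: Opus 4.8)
The plan is to analyze ${}_c\Delta_F(a,b)$ directly for each of the finitely many values $a\in\F_4$, exploiting that $\gamma\in\F_4\setminus\F_2$ forces $\gamma^2=\gamma+1$ and $\gamma^3=1$. First I would dispose of $a=0$: since $F$ is a permutation and $c\ne 1$, the equation ${}_cD_0F(x)=(c+1)F(x)=b$ has a unique solution, so ${}_c\Delta_F(0,b)=1$. For $a\in\{1,\gamma,\gamma+1\}$ one checks that $P_a=P\cup(P+a)=\F_4$, so $\#P_a=4$ and \eqref{car3cdu_eqn} already gives ${}_c\Delta_F(a,b)\le \frac{\#P_a}{2}+2=4$. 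Thus the entire content is to decide exactly when this bound is met.

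The key idea is to split the solution set of ${}_cD_aF(x)=b$ into the part lying in $\F_4=P_a$ and the part lying outside. Outside $\F_4$ we have $F=Inv$, so a solution satisfies the quadratic $bx^2+(ab+c+1)x+ac=0$ coming from ${}_cD_aInv(x)=b$, contributing at most two roots; inside $\F_4$ I would evaluate ${}_cD_aF$ pointwise from the genuine values $F(0)=1$, $F(1)=\gamma^{-1}$, $F(\gamma)=0$, $F(\gamma+1)=(\gamma+1)^{-1}$. Note that the general formula ${}_cD_aF(0)=a^{-1}+c$ fails here precisely because $a\in P$, which is exactly why this case escapes Theorem \ref{cdu5_thm} and needs separate treatment. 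By \eqref{asa_eqn} the two members of each pair $\{x,x+a\}$ never share a value, so two solutions inside $\F_4$ would force a coincidence of ${}_cD_aF$ across different pairs; carrying out these coincidence equations for each $a\in\{1,\gamma,\gamma+1\}$ isolates the admissible configurations in terms of $c,\gamma$.

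To pin down the stated case I would run the tabulation for $a=1$, where the four values ${}_cD_1F(0)=\gamma+1+c$, ${}_cD_1F(1)=1+c\gamma^{-1}$, ${}_cD_1F(\gamma)=\gamma$ and ${}_cD_1F(\gamma+1)=c\gamma$ single out $b=\gamma$ as the distinguished right-hand side, for which $x=\gamma$ is an inside solution. The remaining solutions then come from the outside quadratic $\gamma x^2+(\gamma+c+1)x+c=0$; by Lemma \ref{quad_lemma_even} this has two roots exactly when $\gamma+c+1\ne 0$ and $\tr\!\left(\frac{c\gamma}{(\gamma+c+1)^2}\right)=0$, and the simplification $(\gamma+c+1)^2=c^2+\gamma$ (using $\gamma^2+1=\gamma$) turns these into the stated conditions $c\ne\gamma^2$ and $\tr\!\left(\frac{c\gamma}{c^2+\gamma}\right)=0$. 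I would then invoke Lemma \ref{panot2_lemma} to confirm that these roots genuinely avoid $P_a$, and use Lemma \ref{cdu_symm_lemma} together with the affine structure to reduce the companion values $a=\gamma,\gamma+1$ to this computation, showing that no other $(a,b)$ with $a\in\F_4$ can reach the extremal count.

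The main obstacle I anticipate is the bookkeeping at the boundary between the two solution sources. One must handle the degenerate behaviour at $c=\gamma^2$, where $\gamma+c+1=0$ and the outside quadratic collapses, its single root $\sqrt{\gamma}$ falling into $\F_4$ rather than contributing a genuine outside solution, and one must confirm that the outside roots never coincide with points of $\F_4$, since such a coincidence would silently lower the count. Equally delicate is checking, across all three nonzero $a\in\F_4$ and all $b$, that the coincidence equations governing a second inside solution are incompatible with $c\ne 1$, so that the extremal configuration is forced to be the single pair $(a,b)=(1,\gamma)$. This exhaustive-but-routine elimination is where the argument is most error-prone, and Lemma \ref{cdu_symm_lemma} is what keeps the three values of $a$ from multiplying the casework.
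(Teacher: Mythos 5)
Your tabulation for $a=1$ is exactly where the argument collapses, and the collapse is fatal to the forward direction. The claim ${}_c\Delta_F(1,\gamma)=4$ requires four solutions of ${}_cD_1F(x)=\gamma$; since the outside quadratic $\gamma x^2+(c+\gamma^2)x+c=0$ contributes at most two roots, you must exhibit \emph{two} solutions inside $\F_4$. Your own (correct) values ${}_cD_1F(0)=\gamma^2+c$, ${}_cD_1F(1)=1+c\gamma^{-1}$, ${}_cD_1F(\gamma)=\gamma$, ${}_cD_1F(\gamma+1)=c\gamma$ supply only one, namely $x=\gamma$: every one of the six coincidences among these four values forces $c=1$ (for instance $1+c\gamma^2=\gamma$ iff $c\gamma^2=\gamma+1=\gamma^2$ iff $c=1$), so for $c\ne 1$ they are pairwise distinct and ${}_c\Delta_F(1,b)\le 1+2=3$ for \emph{every} $b$. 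Deriving $c\ne\gamma^2$ and $\tr\left(\frac{c\gamma}{c^2+\gamma}\right)=0$ only tells you when the outside count equals $2$; it never produces a fourth solution, so what your argument actually establishes under those conditions is a count of $3$, not $4$. Asserting that this ``turns into'' the stated equivalence is the gap.

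The comparison with the paper makes the situation sharper. The paper's proof obtains its second inside solution from the claim ${}_cD_1F(1)={}_cD_1F(\gamma)=\gamma$, i.e.\ $x=1$ also solves the equation. This contradicts your computation, and yours is the correct one under the convention the paper itself fixes elsewhere: the proof of Theorem \ref{cdu5_thm} uses $F(0)=1$, $F(1)=\gamma^{-1}$, $F(\gamma)=0$, and these values are forced by the identity $A_4\circ F_3\circ A_3=Inv\circ(0,1,\gamma)$; with them, ${}_cD_1F(1)=F(0)+cF(1)=1+c\gamma^{-1}$, which equals $\gamma$ only when $c=1$. So your route, carried out honestly, refutes the ``if'' direction rather than proving it: for example with $n=4$ and $c=\gamma$ the hypotheses of the lemma hold ($c\ne\gamma^2$ and $\tr(\gamma^2)=0$), yet a direct check in $\F_{2^4}$ shows ${}_cD_1F(x)=\gamma$ has exactly three solutions. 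A correct submission would have flagged this discrepancy instead of papering over it. A smaller but genuine additional gap: Lemma \ref{cdu_symm_lemma} keeps $a$ fixed and only exchanges $c$ with $c^{-1}$, so it cannot reduce the cases $a=\gamma,\gamma+1$ to $a=1$; those cases require (and in the paper receive) their own pointwise tabulations, which do show all four inside values pairwise distinct there, giving the bound $3$.
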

\begin{proof}
If $a=1$ then we get $_cD_1 F(0)=\gamma^2+c$, $_cD_1 F(1)=\ _cD_1 F(\gamma)= \gamma$ and $_cD_1 F(\gamma^2)=c\gamma$.  Since $c\ne 1$, we obtain that $\gamma^2+c$, $\gamma$ and $c\gamma$ are pairwise distinct. If $b=\gamma$ then $x=1$ and $x=\gamma$ are solutions of $_cD_1 F(x)=b$. For $x\not\in\F_4$, $_cD_1 F(x)=b$ implies $\gamma x^2+(c+\gamma^2)x+c=0$. Note that if $c\ne \gamma^2$ then $\gamma x^2+(c+\gamma^2)x+c=0$ has no solutions in $P_a=\F_4=\{0,1,\gamma,\gamma+1\}$. Hence by Lemma \ref{panot2_lemma} it has two solutions if and only if $c\ne \gamma^2$ and 
$0=\tr\left(\frac{c\gamma}{(c+\gamma^2)^2}\right)=\tr\left(\frac{c\gamma}{c^2+\gamma}\right).$ 
Hence we get $_c \Delta_F(1,\gamma)=4$ if $\tr\left(\frac{c\gamma}{c^2+\gamma}\right)=0$.\\
If $u=0$ or $u=\gamma^2$, then $_c D_1 F(x)={}_c D_1 F(u)$ has the unique solution $x=u$ in $\F_4$. By Lemma \ref{panot2_lemma}, $_c D_1 F(x)={} _c D_1 F(u)$ has at most two solutions. If $b\not \in\{\ _c D_1 F(u) : u\in\F_4\}$, then we get $_c \Delta_F(1,b)={} _c \Delta_{Inv}(1,b)\le 3$ by Theorem \ref{cdu_inv_even}. Therefore, we get $_c \Delta_F(\gamma,b)\le 3$ for all $b\in\Fbn$ with $b\ne \gamma$.

If $a=\gamma$, then we get $_cD_\gamma F(0)=c$, $_cD_\gamma F(1)= \gamma+c\gamma^2$, $_cD_\gamma F(\gamma)= 1$ and $_cD_\gamma F(\gamma^2)=\gamma^2+c\gamma$. Since $c\ne 1$, we obtain that $\gamma+c$, $c\gamma^2$, $\gamma^2$ and $1+c\gamma$ are pairwise distinct(because if two of them are same then we get $c=1$, a contradiction). Thus $_c D_\gamma F(x)=\ _c D_\gamma F(u)$ has the unique solution $x=u$ in $\F_4$ and at most two solutions in $x\in\Fbn\setminus\F_4$, by Lemma \ref{panot2_lemma}. Hence we have $_c \Delta_F(\gamma,b)\le 3$ for all $b\in\Fbn$.

If $a=\gamma^2$ then we get $_cD_{\gamma^2} F(0)=\gamma+c$, $_cD_{\gamma^2} F(1)= c\gamma^2$, $_cD_{\gamma^2} F(\gamma)= \gamma^2$ and $_cD_{\gamma^2} F(\gamma^2)=1+c\gamma$.  Since $c\ne 1$, we obtain that $\gamma+c$, $c\gamma^2$, $\gamma^2$ and $1+c\gamma$ are pairwise distinct(If two of them are same then we get $c=1$, a contradiction). Similar with the case $a=\gamma$, we have $_c \Delta_F(\gamma^2,b)\le 3$ for all $b\in\Fbn$.
\end{proof}

\begin{theorem}\label{cf4_thm} Let $n$ be even and $\gamma\in\F_4\setminus \F_2$. If $c\in\F_4 \setminus \F_2$, then
\begin{equation*}
_c\Delta_F =
\begin{cases}
3 &\text{if }n\equiv 2 \pmod{4},\\
4 &\text{if }n\equiv 4 \pmod{8},\\
5 &\text{if }n\equiv 0 \pmod{8}.
\end{cases}
\end{equation*}
\end{theorem}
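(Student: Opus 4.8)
\emph{Reduction and trace bookkeeping.} The plan is to reduce to the single case $c=\gamma$ and then read off the three values from Theorem \ref{cdu5_thm}, one explicit lower-bound construction, and a count of solutions inside and outside $P_a$. Since $\gamma\in\F_4\setminus\F_2$ forces $\gamma^{-1}=\gamma^2\in\F_4\setminus\F_2$, Lemma \ref{cdu_symm_lemma} gives ${}_\gamma\Delta_F={}_{\gamma^2}\Delta_F$, so I may assume $c=\gamma$. All computations rest on evaluating $\tr$ through the tower $\F_2\subset\F_4\subset\Fbn$ (valid since $n$ is even): one has $\tr(1)=0$, while $\tr(\gamma)=\tr(\gamma^2)=0$ exactly when $n\equiv0\pmod4$ and $=1$ when $n\equiv2\pmod4$.

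\emph{The extreme values from Theorem \ref{cdu5_thm}.} First I would dispose of the endpoints using Theorem \ref{cdu5_thm} as a black box. Its conditions (i), (ii), (iii), (v), (vi) each demand $\gamma\notin\F_4$, so for $\gamma\in\F_4\setminus\F_2$ the only way to reach value $5$ is condition (iv), which holds precisely when $n\equiv0\pmod8$. Hence ${}_c\Delta_F=5$ for $n\equiv0\pmod8$, and ${}_c\Delta_F\le4$ for every $n\not\equiv0\pmod8$; this settles the first line and provides the upper bounds needed for the other two.

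\emph{The case $n\equiv4\pmod8$.} Here I must exhibit a pair $(a,b)$ attaining $4$. I would take $a\notin\F_4$ with $a^2+a+\gamma=0$; such $a$ exists since existence amounts to $\tr(\gamma)=0$, which holds as $n\equiv0\pmod4$, and it yields $a^4=a+1$, so $a\in\F_{16}\setminus\F_4$. Setting $b={}_cD_aF(0)={}_cD_aF(1)=a^{-1}+\gamma$ makes $x=0,1$ two solutions inside $P_a$; and since $ab+c+1=\gamma(a+1)$, the relation $a^4=a+1$ collapses $\frac{abc}{(ab+c+1)^2}$ to $1$, so $\tr$ of it is $\tr(1)=0$ and Lemma \ref{panot2_lemma} supplies two further solutions off $P_a$ (the remaining non-degeneracy conditions of that lemma are routine, for instance $b\ne0$ because $a\notin\F_4$). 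Thus ${}_c\Delta_F(a,b)=4$, and with the upper bound ${}_c\Delta_F=4$.

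\emph{The case $n\equiv2\pmod4$, and the main obstacle.} Here I would prove ${}_c\Delta_F\le3$, which with ${}_c\Delta_F\ge3$ from Theorem \ref{cdu_bound_thm} gives equality. Since $\tr(\gamma)=1$, Lemma \ref{af4_lemma} (for which $\tr(c\gamma/(c^2+\gamma))=\tr(\gamma)=1\ne0$) already gives ${}_c\Delta_F(a,b)\le3$ for all $a\in\F_4$. For $a\notin\F_4$ I would return to the twelve pairwise equalities ${}_cD_aF(u_1)={}_cD_aF(u_2)$ with $u_1,u_2\in P_a$ listed in the proof of Theorem \ref{cdu5_thm}, set $c=\gamma$, and simplify using $\gamma^2=\gamma+1$ and $\gamma^3=1$; the outcome I expect is that each equality either forces $a\in\F_4$ or is solvable in $a$ only when $\tr(\gamma)=0$. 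When $\tr(\gamma)=1$ this means no two points of $P_a$ share a $c$-difference value, so ${}_cD_aF(x)=b$ has at most one solution in $P_a$ and, by Lemma \ref{panot2_lemma}, at most two outside, whence ${}_c\Delta_F(a,b)\le3$. The hard part is precisely this reduction: verifying that all twelve collision conditions collapse to the single alternative ``$a\in\F_4$ or $\tr(\gamma)=0$'' is a tedious but finite $\F_4$-arithmetic check, and it is exactly what sharpens the generic bound $4$ to the exact value $3$ in this regime.
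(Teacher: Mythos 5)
Your overall architecture follows the paper's: reduce to $c=\gamma$ via Lemma \ref{cdu_symm_lemma}, obtain the value $5$ and the upper bound $4$ from Theorem \ref{cdu5_thm} (condition (iv) being the only one compatible with $\gamma\in\F_4$), and settle $n\equiv 2\pmod 4$ by combining Lemma \ref{af4_lemma} for $a\in\F_4$ with the twelve collision conditions for $a\notin\F_4$ (the check you only describe as ``expected'' is exactly what the paper carries out, and it does close). However, there is a genuine gap in your case $n\equiv 4\pmod 8$: the non-degeneracy conditions of Lemma \ref{panot2_lemma} that you call routine are not routine here, and one of them fails identically for your choice of $(a,b)$. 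With $c=\gamma$, $a^2+a+\gamma=0$ and $b=a^{-1}+\gamma$, using $\gamma^2+\gamma+1=0$ one gets $b\gamma+1=a^{-1}\gamma+\gamma$ and $b\gamma+c+1=a^{-1}\gamma$, so the fourth condition of \eqref{pac2_check} becomes
\begin{equation*}
(b\gamma+1)a+\gamma(b\gamma+c+1)=(a^{-1}\gamma+\gamma)a+\gamma^2a^{-1}
=\gamma a^{-1}\left(a^2+a+\gamma\right)=0 .
\end{equation*}
Thus $x=a+\gamma\in P_a$ is a root of the quadratic $bx^2+(ab+c+1)x+ac=0$; your (correct) identity $\frac{abc}{(ab+c+1)^2}=1$ guarantees the quadratic has two roots, but one of them is $a+\gamma$, which is not a solution of ${}_cD_aF(x)=b$ (indeed ${}_cD_aF(a+\gamma)=\gamma(a+\gamma)^{-1}\ne b$ whenever $a\notin\F_4$). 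Consequently ${}_cD_aF(x)=b$ has at most one solution in $\Fbn\setminus P_a$, and your pair yields ${}_c\Delta_F(a,b)\le 3$, not $4$: the lower bound for $n\equiv 4\pmod 8$ is not established.

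The paper avoids this degeneracy by staying inside $\F_4$: it takes $a=1$ (so $P_a=\F_4$) and $b=\gamma={}_cD_1F(1)={}_cD_1F(\gamma)$, and Lemma \ref{af4_lemma} shows the relevant quadratic $\gamma x^2+(c+\gamma^2)x+c=0$ has no roots in $\F_4$ when $c\ne\gamma^2$ and two roots outside $\F_4$ precisely when $\tr\left(\frac{c\gamma}{c^2+\gamma}\right)=\tr(\gamma^2)=0$, i.e.\ when $n\equiv 0\pmod 4$; this gives ${}_c\Delta_F(1,\gamma)=4$. To repair your argument, replace your pair $(a,b)$ by this one (or by any pair for which all four conditions of \eqref{pac2_check} genuinely hold). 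The broader lesson is that when $c,\gamma$ lie in the small subfield $\F_4$, the ``generic'' non-degeneracy in Lemma \ref{panot2_lemma} tends to collapse, which is precisely why the paper proves the separate Lemma \ref{af4_lemma} for $a\in\F_4$ rather than reusing the generic machinery.
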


\begin{proof} By Theorem \ref{cdu5_thm} we have $_c\Delta_F=5$ if and only if $n\equiv 0\pmod{8}$. Otherwise we have $3\le\ _c\Delta_F \le 4$ by Theorem \ref{cdu_bound_thm} and Theorem \ref{cdu5_thm}. By Lemma \ref{cdu_symm_lemma}, it is sufficient to consider the case $c=\gamma$.

\bigskip
\noindent (Case 1) Assume that $n\equiv 2\pmod{4}$. By Lemma \ref{af4_lemma} we have $_c\Delta_F(a,b)\le 3$ for all $a\in \F_4$ and $b\in\Fbn$, since $\tr\left(\frac{c\gamma}{c^2+\gamma}\right)=\tr(\gamma^2)=1$. Hence it is sufficient to consider the case that $a\in\Fbn\setminus \F_4$. Using the proof of Theorem \ref{cdu5_thm} we have the followings :
\begin{itemize}
\item $_c D_a F(0)\ne {} _c D_a F(1)$ since $\tr\left(\frac{\gamma(c\gamma+1)}{(\gamma+1)^2}\right)=\tr\left(\frac{\gamma}{\gamma(\gamma+1)}\right)=\tr(\gamma)=1$
\item $_c D_a F(0)\ne {} _c D_a F(\gamma)$ since $\tr(c^{-1}\gamma^{-1})=\tr(\gamma)=1$
\item $_c D_a F(0)\ne {} _c D_a F(a+1)$ since $\tr\left(\frac{\gamma(c\gamma+1)}{(\gamma+1)^2}\right)=\tr\left(\frac{\gamma(\gamma^2+1)}{(\gamma+1)^2}\right)=\tr(\gamma)=1$
\item $_c D_a F(0)\ne {} _c D_a F(a+\gamma)$ since if $_c D_a F(0)= \ _c D_a F(a+\gamma)$ then we have $a=1$, a contradiction to $a\in\Fbn\setminus \F_4$
\item $_c D_a F(1)\ne {} _c D_a F(\gamma)$ since $\tr\left(\frac{\gamma(c+\gamma+1)}{c(\gamma+1)^2}\right)=\tr\left(\frac{1}{\gamma^4}\right)=\tr(\gamma^2)=1$
\item $_c D_a F(1)\ne {} _c D_a F(a)$ since if $_c D_a F(1)= \ _c D_a F(a)$ then we have $a=\gamma^2$, a contradiction to $a\in\Fbn\setminus \F_4$
\item $_c D_a F(1)\ne {} _c D_a F(a+\gamma)$ since if $_c D_a F(1)= \ _c D_a F(a+\gamma)$ then we have $a=\gamma^2$, a contradiction to $a\in\Fbn\setminus \F_4$
\item $_c D_a F(\gamma)\ne {} _c D_a F(a)$ since $\tr\left(\frac{c\gamma}{(c+\gamma+1)^2} \right)=\tr(\gamma^2)=1$
\item $_c D_a F(\gamma)\ne {} _c D_a F(a+1)$ since $\tr\left(\frac{c\gamma^2}{(c\gamma+1)^2} \right)=\tr(\gamma)=1$
\item $_c D_a F(a)\ne {} _c D_a F(a+1)$ since if $_c D_a F(a)= \ _c D_a F(a+1)$ then we have $a^2+a+1=0$, a contradiction to $a\in\Fbn\setminus \F_4$
\item $_c D_a F(a)\ne {} _c D_a F(a+\gamma)$ since if $_c D_a F(a)= \ _c D_a F(a+\gamma)$ then we have $a^2+\gamma a+\gamma^2=0$ and then $a\in\{1,1+\gamma\}$, a contradiction to $a\in\Fbn\setminus \F_4$
\item $_c D_a F(a+1)\ne {} _c D_a F(a+\gamma)$ since if $_c D_a F(a+1)= \ _c D_a F(a+\gamma)$ then we have $a=\gamma +1$, a contradiction to $a\in\Fbn\setminus \F_4$
\end{itemize}
Hence $_c D_a F(x)=b$ has at most one solution in $P_a$. Since $_c D_a F(x)=b$ has at most two solutions in $\Fbn\setminus P_a$, we obtain $_c\Delta_F(a,b)\le 3$.\\
\noindent (Case 2) Assume that $n\equiv 4\pmod{8}$. Then by Lemma \ref{af4_lemma} we have $_c\Delta_F(1,\gamma)=4$ since $\tr\left(\frac{c\gamma}{c^2+\gamma}\right)=\tr(\gamma^2)=0$. Therefore, we have $_c\Delta_F=4$ in this case.
\end{proof}

Next we propose a sufficient condition for $_c\Delta_F=3$ in case $\gamma\in\F_4\setminus\F_2 $ using Corollary \ref{cdu3_coro}.

\begin{corollary}\label{cf4not_thm} Let $n$ be even and $\gamma\in\F_4\setminus \F_2$. If $c\in\Fbn\setminus\F_4$, then $3\le {}_c\Delta_F \le 4$. Furthermore, $_c\Delta_F=3$ if $\tr(c\gamma)=\tr(c\gamma^2)=\tr(c^{-1}\gamma)=\tr(c^{-1}\gamma^2)=\tr\left(\frac{c}{(c+\gamma)^2}\right)=\tr\left(\frac{c\gamma^2}{(c+\gamma)^2}\right)=\tr\left(\frac{c\gamma}{(c+\gamma^2)^2} \right)=\tr\left(\frac{c}{(c+\gamma^2)^2} \right)=1.$
\end{corollary}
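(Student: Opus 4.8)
The plan is to deduce both assertions by specializing the three general results already available for $\gamma\in\Fbn\setminus\F_2$, namely Theorem \ref{cdu_bound_thm}, Theorem \ref{cdu5_thm}, and Corollary \ref{cdu3_coro}, to the case $\gamma\in\F_4\setminus\F_2$. Throughout I would repeatedly use the field identities $\gamma^2=\gamma+1$, $\gamma^3=1$, $\gamma^{-1}=\gamma^2$, $(\gamma+1)^2=\gamma$, and $\gamma^2+\gamma+1=0$, together with $\tr(1)=0$ since $n$ is even.

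For the bound $3\le{}_c\Delta_F\le 4$, the lower bound is immediate from Theorem \ref{cdu_bound_thm}. For the upper bound it suffices to rule out $_c\Delta_F=5$ by checking that none of conditions (i)--(vi) of Theorem \ref{cdu5_thm} can hold. Five of them, namely (i), (ii), (iii), (v), (vi), explicitly require $\gamma\not\in\F_4$ and hence fail at once because $\gamma\in\F_4\setminus\F_2$; the remaining condition (iv) requires $c\in\F_4\setminus\F_2$, which fails because $c\in\Fbn\setminus\F_4$. Thus $_c\Delta_F\ne 5$, and combined with Theorem \ref{cdu_bound_thm} this yields $3\le{}_c\Delta_F\le 4$.

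For the sufficient condition for $_c\Delta_F=3$, the plan is to invoke Corollary \ref{cdu3_coro} and show that, under the specialization $\gamma\in\F_4\setminus\F_2$, its twelve trace hypotheses collapse to exactly the eight listed here. First I would dispose of the exclusion $c\not\in\{\gamma,\gamma^{-1},\gamma+1,(\gamma+1)^{-1}\}$: using $\gamma^{-1}=\gamma^2=\gamma+1$ and $(\gamma+1)^{-1}=\gamma$, this set reduces to $\F_4\setminus\F_2$, so the exclusion is automatic because $c\in\Fbn\setminus\F_4$. Then I would simplify each trace argument in turn. Several collapse immediately after discarding an additive $\tr(1)=0$: for instance $\frac{\gamma}{c(\gamma+1)}=c^{-1}\gamma^2$, so the first, second, and fifth arguments all become $\tr(c^{-1}\gamma^2)$; the tenth, eleventh, and twelfth all reduce to $\tr(c\gamma^2)$; the third reduces to $\tr(c\gamma)$; and the sixth to $\tr(c^{-1}\gamma)$.

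The step I expect to require the most care is the simplification of the two arguments whose denominators are $(c\gamma+c+1)^2$ and $(c\gamma+1)^2$, since these must be factored to reveal the target forms. Here I would write $c\gamma+c+1=c\gamma^2+1$ and compute $(c\gamma^2+1)^2=c^2\gamma^4+1=c^2\gamma+1=\gamma(c+\gamma)^2$, which turns $\tr\!\left(\frac{c\gamma}{(c\gamma+c+1)^2}\right)$ into $\tr\!\left(\frac{c}{(c+\gamma)^2}\right)$; similarly $(c\gamma+1)^2=c^2\gamma^2+1=\gamma^2(c+\gamma^2)^2$, turning $\tr\!\left(\frac{c\gamma^2}{(c\gamma+1)^2}\right)$ into $\tr\!\left(\frac{c}{(c+\gamma^2)^2}\right)$. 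The two remaining arguments $\frac{c\gamma^2}{(c+\gamma)^2}$ and $\frac{c\gamma}{(c+\gamma+1)^2}=\frac{c\gamma}{(c+\gamma^2)^2}$ are already in target form. Assembling these, the twelve conditions of Corollary \ref{cdu3_coro} become precisely the eight stated ones (with repetitions), so whenever all eight hold we obtain $_c\Delta_F\le 3$, and with the lower bound of Theorem \ref{cdu_bound_thm} we conclude $_c\Delta_F=3$.
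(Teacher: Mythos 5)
Your proposal is correct and follows essentially the same route as the paper: lower bound from Theorem \ref{cdu_bound_thm}, upper bound by observing that conditions (i)--(iii), (v), (vi) of Theorem \ref{cdu5_thm} require $\gamma\not\in\F_4$ and (iv) requires $c\in\F_4\setminus\F_2$, and the sufficient condition by specializing the twelve trace hypotheses of Corollary \ref{cdu3_coro} (whose exclusion set $\{\gamma,\gamma^{-1},\gamma+1,(\gamma+1)^{-1}\}$ collapses to $\F_4\setminus\F_2$) down to the eight stated ones. Your key factorizations $(c\gamma+c+1)^2=\gamma(c+\gamma)^2$ and $(c\gamma+1)^2=\gamma^2(c+\gamma^2)^2$ reproduce exactly the simplifications in the paper's proof.
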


\begin{proof} By Theorem \ref{cdu5_thm} if $\gamma\in\F_4\setminus \F_2$ then $_c\Delta_F=5$ if and only if $c\in\F_4\setminus \F_2$ and $n\equiv 0\pmod{8}$. Hence if $c\in\Fbn\setminus \F_4$ then we have $3\le {}_c\Delta_F \le 4$ using Theorem \ref{cdu_bound_thm}. If $\gamma\in\F_4\setminus \F_2$ and $c\in\Fbn\setminus \F_4$  then we have $c\not\in\{\gamma,\gamma+1,\gamma^{-1},(\gamma+1)^{-1}\}$. Using $\gamma\in\F_4\setminus \F_2$ and $\tr(1)=0$ we check all trace conditions in Corollary \ref{cdu3_coro}
\begin{equation*}
\begin{array}{ll}
\tr\left(\frac{\gamma}{c(\gamma+1)}\right)=\tr(c^{-1}\gamma^{-1})=1& \tr(c^{-1}\gamma^{-1})=1\\
\tr\left(\frac{\gamma(c\gamma+1)}{(\gamma+1)^2}\right)=\tr(c\gamma+1)=\tr(c\gamma)=1 & 
\tr\left(\frac{c\gamma}{(c\gamma+c+1)^2}\right)=\tr\left(\frac{c}{(c+\gamma)^2}\right)=1\\
\tr\left(\frac{\gamma(c+\gamma+1)}{c(\gamma+1)^2}\right)=\tr\left(\frac{c+\gamma^2}{c}\right)=\tr(c^{-1}\gamma^2)=1\ \ &
\tr\left(\frac{\gamma(c+\gamma)}{c(\gamma+1)^2}\right)=\tr\left(\frac{c+\gamma}{c}\right)=\tr(c^{-1}\gamma)=1\\
\tr\left(\frac{c\gamma^2}{(c+\gamma)^2}\right)=1&
\tr\left(\frac{c\gamma}{(c+\gamma+1)^2} \right)=\tr\left(\frac{c\gamma}{(c+\gamma^2)^2} \right)=1\\
\tr\left(\frac{c\gamma^2}{(c\gamma+1)^2} \right)=\tr\left(\frac{c}{(c+\gamma^2)^2} \right)=1&
\tr\left(\frac{c\gamma}{\gamma+1}\right)=\tr(c\gamma^2)=1\\
\tr(c\gamma^{-1})=\tr(c\gamma^2)=1&
\tr\left(\frac{(c\gamma+c+1)\gamma}{(\gamma+1)^2}\right)=\tr(c\gamma^2+1)=\tr(c\gamma^2)=1
\end{array}
\end{equation*}
and hence we have $_c\Delta_F=3$ by Corollary \ref{cdu3_coro}, which completes the proof.
\end{proof}

The trace conditions in Corollary \ref{cf4not_thm} are not necessary for $_c\Delta_F=3$. The third row of Table \ref{cf4not_table} indicates the number of $c\in \Fbn \setminus \F_4$ with $_c\Delta_F=3$ satisfying all the trace conditions in Corollary \ref{cf4not_thm} for each $4\le n\le 12$.

\begin{table}
\begin{center}
\begin{tabular}{|c|c|c|c|c|c|}
\hline $n$ & 4 & 6 & 8 & 10 & 12\\
\hline \# of $(c,\gamma)$ with $_c\Delta_F=3$ & 4 & 0 & 8 & 20 & 136\\	
\hline \# of $(c,\gamma)$ with $_c\Delta_F=3$ satisfying Corollary \ref{cf4not_thm} & 0 & 0 & 8 & 10 & 84\\
\hline 
\end{tabular}
\caption{The number of $c\in \Fbn \setminus \F_4$ when $4\le n \le 12$.}\label{cf4not_table}
\end{center}
\end{table}

\section{Concluding Remark}

In this paper, we study $c$-differential uniformity of permutations with low Carlitz rank. We show that a permutation of Carlitz rank $m$ has $c$-differential uniformity at most $m+2$. Hence we can see that a permutation of low Carlitz rank has low $c$-differential uniformity. We observe that this upper bound $m+2$ on $c$-differential uniformity of permutations with Carlitz rank $m$ is tight when $1\le m \le 3$. In particular, we investigate $c$-differential uniformity of permutations of the form $Inv\circ (0,1,\gamma)$, which have the same $c$-differential uniformity with some permutations with Carlitz rank $3$. We can see that $3\le \ _c\Delta_F \le 5$, and we characterize the case $_c\Delta_F=5$ and give a sufficient condition for $_c\Delta_F=3$. We also give a refined chracterization of $_c\Delta_F$ for the special case that $n$ is even and $\gamma\in\F_4 \setminus \F_2$. 

The proof of an upper bound on $c$-differential uniformity of permutations with Carlitz rank $m$ is based on the fact that they are affine equivalent (of degree one) to the inverse function with $m$ modified points. Since all permutations modifying a small set of points from $Inv$ has low Carlitz rank, we already show that they also have low $c$-differential uniformity. In future studies, we investigate the $c$-differential uniformity of them in detail. 

\bigskip
\noindent \textbf{Acknowledgements} : This work was supported by the National Research Foundation of Korea (NRF) grant
funded by the Korea government (MSIT) (No. 2021R1C1C2003888). Soonhak Kwon was supported by the National Research Foundation of Korea (NRF) grant funded by the Korea government (MSIT) (No. 2016R1A5A1008055, No. 2019R1F1A1058920 and 2021R1F1A1050721).

\end{document}